\let\csname equation*\endcsname\relax
\let\csname endequation*\endcsname\relax
\def\squareforqed{\hbox{\rlap{$\sqcap$}$\sqcup$}}
\def\qed{\ifmmode\squareforqed\else{\unskip\nobreak\hfil
\penalty50\hskip1em\null\nobreak\hfil\squareforqed
\parfillskip=0pt\finalhyphendemerits=0\endgraf}\fi}
\def\endenv{\ifmmode\;\else{\unskip\nobreak\hfil
\penalty50\hskip1em\null\nobreak\hfil\;
\parfillskip=0pt\finalhyphendemerits=0\endgraf}\fi}
\newenvironment{proof}{\noindent \textbf{{Proof~}}}{\qed}
\newtheorem{theorem}{Theorem}[section]
\newtheorem{lemma}[theorem]{Lemma}
\newtheorem{corollary}[theorem]{Corollary}
\newtheorem{definition}[theorem]{Definition}
\newcommand{\openone}{\mbox{$1 \hspace{-1.0mm}  {\bf l}$}}
\newcommand*{\eins}{\ensuremath{\mathbbm 1}}
\newcommand{\ud}{\mathrm{d}}
\newcommand*{\bbR}{\mathbb{R}}
\newcommand*{\cE}{\mathcal{E}}
\newcommand*{\cH}{\mathcal{H}}
\newcommand*{\cL}{\mathcal{L}}
\newcommand*{\cP}{\mathcal{P}}
\newcommand*{\cX}{\mathcal{X}}
\newcommand*{\id}{\mathsf{id}}
\newcommand*{\ket}[1]{|#1\rangle}
\newcommand*{\bra}[1]{\langle #1|}
\newcommand*{\braket}[2]{\langle #1| #2 \rangle}
\newcommand*{\proj}[1]{\ket{#1}\bra{#1}}
\newcommand*{\fr}[2]{\frac{#1}{#2}}
\newcommand{\nrm}[1]{\left\|#1\right\|}
\newcommand{\beq}{\begin{equation}}
\newcommand{\eeq}{\end{equation}}
\newcommand*{\assign}{\ensuremath{\kern.5ex\raisebox{.1ex}{\mbox{\rm:}}\kern -.3em =}}
\begin{document}

\title{Quantumness of correlations, quantumness of ensembles and quantum data hiding}

\author{M Piani$^{1,2}$, V Narasimhachar$^3$ and J Calsamiglia$^4$}
\address{${^1}$ Institute for Quantum Computing and Department of Physics and Astronomy, University of Waterloo, Waterloo ON N2L 3G1, Canada\\
$^2$ Department of Physics, University of Strathclyde, Glasgow G4 0NG, UK\\
$^3$ Department of Physics and Astronomy and Institute for Quantum Science and Technology, University of Calgary, Calgary AB T2N 1N4, Canada\\
$^4$ F{\'i}sica Te{\`o}rica: Informaci{\'o} i Fen{\`o}mens Qu{\`a}ntics, Departament de F{\'i}sica, Universitat Aut{\`o}noma de Barcelona, 08193 Bellaterra (Barcelona), Spain}

\date{\today}

\eads{\mailto{mpiani@uwaterloo.ca},\mailto{vnarasim@ucalgary.ca},\mailto{john.calsamiglia@uab.cat}}

\begin{abstract}
We study the quantumness of correlations for ensembles of bi- and multi-partite systems and relate it to the task of quantum data hiding. Quantumness is here intended in the sense of minimum average disturbance under local measurements. We consider a very general framework, but focus on local complete von Neumann measurements as cause of the disturbance, and, later on, on the trace-distance as quantifier of the disturbance.  We discuss connections with entanglement and previously defined notions of quantumness of correlations. We prove that a large class of quantifiers of the quantumness of correlations are entanglement monotones for pure bipartite states. In particular, we define an entanglement of disturbance for pure states, for which we give an analytical expression. Such a measure coincides with negativity and concurrence for the case of two qubits. We compute general bounds on disturbance for both single states and ensembles, and consider several examples, including the uniform Haar ensemble of pure states, and pairs of qubit states. Finally, we show that the notion of ensemble quantumness of correlations is most relevant in quantum data hiding. Indeed, while it is known that entanglement is not  necessary for a good quantum data hiding scheme, we prove that ensemble quantumness of correlations is.
\end{abstract}

\maketitle



\section{Introduction}

Although quantum entanglement~\cite{reviewent} constitutes one of the most counterintuitive aspects of quantum mechanics
and is a key ingredient of quantum information processing~\cite{nielsenchuang}, in recent years other more general quantum features of correlations have attracted a lot of interest.  The role of such general quantumness of correlations has been investigated in areas that go from the foundations of quantum mechanics, to thermodynamics, to quantum computation, to quantum information, to entanglement theory~\cite{reviewdiscord}. 

Discord~\cite{ollivier2001quantum,henderson2001classical}, as well as a number of other related quantifiers~\cite{reviewdiscord}, were introduced to measure such general quantumness. Two fruitful and conceptually interesting approaches to measuring the quantumness of correlations are in terms of disturbance and extraction of correlations. The first approach, disturbance-based, identifies a distributed state as classical if local maximally informative measurements (that is, rank-one projections) exist that do not perturb the state~\cite{luodisturbance}. The second approach, based on the extraction of correlations, identifies a state as classical if local measurements exist that transfer all the correlations present between the quantum subsystems to classical variables/systems~\cite{ollivier2001quantum,piani2008no}. The two approaches are very tightly related, and the two classes of classical states they pin down---those that are not perturbed by local measurement and those whose correlations can be made classical, respectively---coincide~\cite{reviewdiscord}.

In this paper we mostly focus on the quantumness of correlations as understood in terms of measurement-induced disturbance. In most of the recent literature on the quantumness of correlations, a single state distributed among many parties is typically considered. On the other hand, the study of the quantumness of ensembles of states has a long history (see, e.g.,~\cite{ensembleFuchs,horodecki2005common,horodecki2006quantumness}). Recently, the two concepts---the quantumness of correlations and the quantumness of ensembles---have been connected both conceptually and quantitatively. In particular, in~\cite{ensembleLuo2010,ensembleLuo2011,yao2013quantum} an approach was put forward where the quantumness of the ensembles is quantified in terms of the quantumness of correlations of a properly defined bipartite state.

In this paper, on one hand we take a step further and consider the quantumness of correlations of ensembles of distributed states. In particular, we point out how several quantumness measures---either of single-system ensembles, or of the correlations of a single state of a multipartite system, or of the correlations of an ensemble of states of a multipartite system---can be understood in the same formally unified approach.

On the other hand, we consider the role of the quantumness of correlations in quantum data hiding~\cite{hiding2001,divincenzo2002quantum}. In its simplest instance, quantum data hiding consists in the task of encoding a classical bit in a quantum state shared by distant parties---i.e., in letting these parties share one out of two possible quantum states---so that, while such a bit can be recovered perfectly or almost perfectly through a global---i.e., unrestricted---measurement that discriminates between the two states, the bit is almost perfectly hidden from parties that are limited to act via local operations and classical communication (LOCC). It is known that quantum data hiding is possible using pairs of unentangled quantum states~\cite{eggeling2002hiding}. Nonetheless, it is easy to see that some quantumness of correlations must be at play. In the following we prove that, while there are hiding schemes where one of the two hiding states does not display any quantumness of correlations, on one hand (i) the lack of quantumness of correlations in one of the hiding states imposes limits on the hiding scheme and, on the other hand, (ii)  the ensemble composed by the two states in a general hiding scheme must necessarily display a large quantumness of correlations, as quantified by one of the ensemble measures we introduce.

The paper is organized as follows. In Section~\ref{sec:definitions} we look at a general framework to quantify the quantumness of ensembles and the quantumness of correlations of ensembles. We further discuss relations with notions and measures of quantumness of ensembles and correlations already present in the literature, including entanglement. In Section~\ref{sec:bounds} we focus on the particular disturbance induced by complete projective measurements and quantified by the trace distance. We show that this kind of disturbance measure induces a natural entanglement measure on pure bipartite states, for which we provide an analytical expression. We also compute bounds on general disturbance measures and focus on some concrete examples. In Section~\ref{sec:datahiding} we consider the role of the quantumness of correlations in quantum data hiding. We show how the disturbance-based measures of correlations we introduced and studied in the preceding sections provide natural bounds on the quality of quantum data
hiding schemes. We present concluding remarks in Section~\ref{sec:conclusions}.

\section{Measures of ensemble-quantumness}
\label{sec:definitions}

As indicated in the introduction, we focus on the quantumness of ensembles of states as revealed in terms of the (average) disturbance necessarily induced by operations---in particular, measurements---in some restricted class.

\subsection{Definitions}

We adopt the following general definition for disturbance:
\begin{definition}
Given a measure $D[\cdot,\cdot]$ of distance between quantum states, and a set $\cL$ of measurement strategies, we define the \emph{quantumness of an ensemble $\cE\assign\left\{(p_i,\rho^{(i)})\right\}_{i=1}^n$ under $\cL$ as measured by $D$}, or simply \emph{the $(D,\cL)$-quantumness of $\cE$}, as
\begin{equation}
\label{eq:basicdefinition}
Q_{D,\cL}[\cE]\assign\inf_{\Lambda\in\cL}\sum_{i=1}^np_iD\left[\rho^{(i)},\Lambda[\rho^{(i)}]\right],
\end{equation}
where $\Lambda[\rho]$ denotes the resulting state --- typically with classical features (see Definition~\ref{def:classicalstate} below) --- when $\rho$ is subjected to $\Lambda$. Such a notion of quantumness is well-defined only when the operations $\Lambda$ are such that the distance measure $D\left[\rho^{(i)},\Lambda[\rho^{(i)}]\right]$ is well-defined.
\end{definition}

Two meaningful distance measures to consider are the trace distance, $D_1[\sigma,\tau]=1/2\|\sigma-\tau\|_1$, with $\|X\|_1=\Tr\sqrt{X^\dagger X}$, and the relative entropy~\footnote{Here we take the notion of distance in a loose sense, since the relative entropy is not a distance as it is neither symmetric, nor satisfies triangular inequality.},
$S[\sigma\|\tau]=\Tr(\sigma(\log_2\sigma - \log_2 \tau))$, because they have well-understood operational meanings in terms of state distinguishability~\cite{nielsenchuang}. For the sake of simplicity, we will use the notation  $Q_{D_1,\cL}$ and $Q_{S,\cL}$, respectively. 
Another sensible choice would be the Bures distance  $\sqrt{1- F(\sigma,\tau)}$ (or its square), with the fidelity $F(\sigma,\tau)=\Tr\sqrt{\sqrt{\sigma}\tau\sqrt{\sigma}}$~\cite{uhlmann76,jozsa94}, and we will do use it in Section~\ref{sec:datahiding}, but we will mostly focus on the trace distance and the relative entropy. 

As for $\cL$,  in order for our definition of ensemble quantumness to make sense, we must restrict it to sets of operations which admit a meaningful ``post-measurement state'' whence the distance measure is rendered meaningful. In principle, if the states in the ensemble are density matrices acting on a Hilbert space $\cX$, any subset of the set of channels $C(\cX):=\{\Lambda:L(\cX)\rightarrow L(\cX), \Lambda\textrm{ completely positive and trace preserving}\}$, where $L(\cX)$ is the set of operators from $\cX$ to $\cX$, would be a potential mathematically sound choice.  Nonetheless, we aim here at capturing the idea of ``informative'' measurement, either constrained or unconstrained, that necessarily leads to some disturbance for most states~(see, e.g., \cite{d2003heisenberg,kretschmann2008information,luodisturbance} and references therein).  Furthermore, we have in mind the following notions of classicality, for ensembles and for correlations, respectively~\cite{ensembleFuchs,horodecki2005common,horodecki2006quantumness,groisman2007quantumness,piani2008no}.

\begin{definition}
A set of states $\{\rho^{(i)}\}$ is \emph{classical} if all the states in the set commute, i.e. $[\rho^{(i)},\rho^{(j)}]=0$ for all $i,j$, so that the states can be diagonalized simultaneously.
\end{definition}

\begin{definition}
\label{def:classicalstate}
An $n$-partite state $\rho_{A_1A_2\ldots A_n}$ is classical on $A_k$ if $ \rho_{A_1A_2\ldots A_n} = \sum_i p_i \proj{i}_{A_k}\otimes  \bra{i}_{A_k}\rho_{A_1A_2\ldots A_n} \ket{i}_{A_k}$ for some orthonormal basis $\{\ket{i}\}$ of $A_k$. In particular a bipartite state $\rho_{AB}$ is called \emph{classical-quantum} if it is classical on $A$ and \emph{classical-classical} (or \emph{fully classical}) if it is classical on both $A$ and $B$.
\end{definition}
It is worth remarking that there are distributed states that are unentangled (or separable)---i.e., of the form $\sum_i p_i \rho^{(i)}_A\otimes \rho^{(i)}_B$---but not classical~\footnote{Notice that almost all multipartite states are not classical in the sense above~\cite{ferraro2010almost}. This is an additional motivation to study general quantumness quantitatively, rather than qualitatively.}.

For these reasons we mostly focus on complete projective measurements $\Pi\assign\left\{P^{(k)}\right\}_k$ acting as $\Pi[\sigma]=\sum_k P^{(k)} \sigma P^{(k)}$, and on complete local projective measurements $\Pi_A\assign\left\{P_A^{(j)}\right\}_j$ (on subsystem $A$; similarly for subsystem $B$) or complete bilocal ones $\Pi_A\otimes\Pi_B$ (and generalizations thereof for multipartite systems) when we want to focus on the quantumness of correlations. Of course, one could consider  other generalizations, for example to non-complete measurements, or to channels whose Kraus operators have some specified rank~\cite{luo2013hierarchy,sevagprojectors,brodutchdistributedcomputation}, but for the sake of concreteness we will limit ourselves to the explicit cases above.

For the sake of concreteness, we explicitly list the quantifiers of ensemble quantumness corresponding to the set of operations $\{\Pi\}$, $\{\Pi_A\}$ and $\{\Pi_A\otimes\Pi_B\}$, and to the two distance measures mentioned above:
\begin{itemize}
\item ensemble quantumness for single systems:
\begin{align}
Q_{D_1,\{\Pi\}}[\cE]&\assign\min_{\Pi}\sum_{i=1}^np_i\frac{1}{2}\left\|\rho^{(i)}-\Pi[\rho^{(i)}]\right\|_1,\\
Q_{S,\{\Pi\}}[\cE]&\assign\min_{\Pi}\sum_{i=1}^np_iS\left[\rho^{(i)}\right.\left\|\Pi[\rho^{(i)}]\right] = \min_{\Pi}\sum_{i=1}^np_i \left[S(\Pi[\rho^{(i)}]) - S(\rho^{(i)})\right];
\end{align}
\item ensemble quantumness of correlations:
\begin{itemize}
\item one-sided:
\begin{align}
Q_{D_1,\{\Pi_A\}}[\cE]&\assign\min_{\Pi_A}\sum_{i=1}^np_i\frac{1}{2}\left\|\rho^{(i)}-\Pi_A[\rho^{(i)}]\right\|_1,\\
Q_{S,\{\Pi_A\}}[\cE]&\assign\min_{\Pi_A}\sum_{i=1}^np_iS\left[\rho^{(i)}\right.\left\|\Pi_A[\rho^{(i)}]\right]\\
			&\,=\min_{\Pi_A}\sum_{i=1}^np_i \left[S(\Pi[\rho^{(i)}]) - S(\Pi_A[\rho^{(i)}])\right];
\end{align}
\item two-sided:
\begin{align}
Q_{D_1,\{\Pi_A\otimes\Pi_B\}}[\cE]&\assign\min_{\Pi_A\otimes\Pi_B}\sum_{i=1}^np_i\frac{1}{2}\left\|\rho^{(i)}-(\Pi_A\otimes\Pi_B)[\rho^{(i)}]\right\|_1.\\
Q_{S,\{\Pi_A\otimes\Pi_B\}}[\cE]&\assign\min_{\Pi_A\otimes\Pi_B}\sum_{i=1}^np_iS\left[\rho^{(i)}\right.\left\|(\Pi_A\otimes\Pi_B)[\rho^{(i)}]\right]\\
		&\,=\min_{\Pi_A\otimes\Pi_B}\sum_{i=1}^np_i \left[S(\Pi[\rho^{(i)}]) - S(\Pi_A\otimes\Pi_B[\rho^{(i)}])\right].
\end{align}
\end{itemize}
\end{itemize}
 Note that we have used the fact that for projective measurements the infimum in \eqref{eq:basicdefinition} is in fact a minimum. In addition,
for the measures based on relative entropy, we made use of the relation $S(\rho\|\Pi[\rho])=S(\Pi[\rho])-S(\rho)$, valid for any (not necessarily complete) projective measurement $\Pi$, with $S(\sigma):=-\Tr(\sigma\log_2\sigma)$ the von Neumann entropy~\cite{nielsenchuang}.

\subsection{Some basic observations}
\label{sec:basic}

We first remark that the use of one specific distance measure rather than another one in general strongly depends on the context and, potentially, on the convenience of calculation. For example, in the following we will often focus on the quantity $Q_{D_1,\{\Pi_A\}}$, because of its natural connection with the task of discriminating distributed states via LOCC. On the other hand, it is natural to expect $Q_{S,\cL}$ to be more relevant from an information-theoretical point of view (see Section~\ref{sec:relations}). Also, the quantumness measures $Q_{D_1,\cL}$ are always bounded above by unity, and could be considered to be not so helpful in providing insight on the role of quantumness with increasing dimensions of the systems under scrutiny. We do not believe this to be a strong contraindication to the adoption of measures in the class $Q_{D_1,\cL}$; furthermore, one can always reinstate a scaling with dimensions via the composition with the logarithm function. For example, in Section~\ref{sec:bounds} we will find less trivial upper bounds for $Q_{D_1,\{\Pi_A\}}$, and show that, for fixed local dimension $d$, $Q_{D_1,\Pi_A}$ is maximized by maximally entangled states---even in the case of the trivial ensemble made of only one state---assuming in such a case the value $1-1/d$. This suggests to define a logarithmic version of disturbance as
\[
LQ_{D_1,\Pi_A}:=-\log (1-Q_{D_1,\Pi_A}).
\]
Such a quantity varies between $0$ for ensembles of states classical on $A$, to $\log d$ for (ensembles of) maximally entangled states (in dimension $d$).

In general, the properties of the measure $Q_{D,\cL}[\cE]$ will strongly depend on the properties of the distance $D$ and of the class of operations $\cL$. Some basic properties like invariance under unitaries or local unitaries are easily assessed for relevant choices of $D$ and $\cL$. Interestingly, if we suppose that $D$ is jointly convex, as it happens for any norm-based distance---like the trace distance---and for the relative entropy, then $Q_{D,\cL}[\cE]$ is a monotone under coarse graining, independently of the choice of $\cL$. More precisely, if $\cE'=\{(p_i',\rho'^{(i)})\}$ is such that $p'_i\rho'^{(i)}=\sum_{k\in I_i} p_k \rho^{(k)}$ for some starting ensemble $\cE=\left\{(p_j,\rho^{(j)})\right\}_{j=1}^n$ and some partition $\left\{I_i:\cup_i I_i=\{1,2,\ldots, n\}, I_i\cap I_j=\emptyset\,\forall i\neq j\right\}$, then
\[
Q_{D,\cL}[\cE']\leq Q_{D,\cL}[\cE].
\]

A final remark is that through Pinsker's inequality \cite{hiai1981,schumacher2002approximate}, $\|\rho-\sigma\|_1^2\le(\ln4)S[\rho\|\sigma]$, one easily derives the relation 
\begin{equation}
 Q_{D_1,\cL}[\cE]\le\sqrt{\frac{\ln2}{2}}\sqrt{Q_{S,\cL}[\cE]}~\quad\forall \cL,\cE.
\end{equation}

\subsection{Relations with other measures of quantumness and entanglement}
\label{sec:relations}

In this section we relate the family of quantities introduced in Section~\ref{sec:definitions} with quantifiers of quantumness already present in the Literature.

\subsubsection{Quantumness of correlations of a single state}
\label{sec:quantcorrsinglestate}

In the case in which the ensemble $\cE$ is trivial and contains only one state, i.e., $\cE=\{(1,\rho)\}$, the ensemble measures become single-state measures and we write $Q_{D,\cL}[\rho]$ for $Q_{D,\cL}(\{(1,\rho)\})$. In particular the  quantities introduced above trivially vanish in the case we consider single systems and ``global''  complete von Neumann measurements, i.e.,
\begin{align*}
Q_{D_1,\{\Pi\}}[\rho] &= Q_{S,\{\Pi\}}[\rho]=0,
\end{align*}
because one can always consider projective measurements in the eigenbasis of $\rho$. On the other hand, for a distributed state $\rho=\rho_{AB}$, all the following are non-trivial measures of the quantumness of correlations of $\rho_{AB}$:
\begin{align}
Q_{D_1,\{\Pi_A\}}[\rho_{AB}] &= \min_{\Pi_A}\frac{1}{2}\|\rho_{AB} - \Pi_A[\rho_{AB}]\|_1,\label{eq:tracegeomonesided}\\
Q_{S,\{\Pi_A\}}[\rho_{AB}] &=\min_{\Pi_A}S\left[\rho_{AB}\right.\left\|\Pi_A[\rho_{AB}]\right]= \min_{\Pi_A} S(\Pi_A[\rho_{AB}]) - S(\rho_{AB}),\\
Q_{D_1,\{\Pi_A\otimes\Pi_B\}}[\rho_{AB}] &= \min_{\Pi_A\otimes\Pi_B}\frac{1}{2}\|\rho_{AB} - (\Pi_A\otimes\Pi_B)[\rho_{AB}]\|_1,\label{eq:tracegeomtwosided}\\
Q_{S,\{\Pi_A\otimes\Pi_B\}}[\rho_{AB}] &=\min_{\Pi_A\otimes\Pi_B}S\left[\rho_{AB}\right.\left\|(\Pi_A\otimes\Pi_B)[\rho_{AB}]\right]\\
							&= \min_{\Pi_A\otimes\Pi_B} S((\Pi_A\otimes\Pi_B)[\rho_{AB}]) - S(\rho_{AB}).
\end{align}
These four quantifiers correspond to the measurement-induced disturbance~\cite{luodisturbance} due to one-sided or two-sided measurement, measured either entropically or by means of the trace-distance. The latter case, corresponding to a ``trace-norm discord'', has been defined and studied in~\cite{Debarba2012,Rana2013,tracediscordsarandy,nakano2013negativity}; the entropic measures were instead introduced first as quantum deficits~\cite{localnonlocalhorodecki}. Such measures also correspond, either in general (in the entropic case)~\cite{modiunified} or at least in relevant special cases (for the trace-norm)~\cite{tracediscordsarandy,nakano2013negativity}, to the distance of the given state from the set of classical-quantum or classical-classical states. Another interpretation worth mentioning of such measures, again valid either in general or in special cases, is that in terms of entanglement generated in a quantum measurement~\cite{streltsovmeasurement,piani2011all,gharibian2011characterizing,pianiadesso,discordsciarrino}. We refer to~\cite{nakano2013negativity} for a recent and more extensive summary of the relevant properties and relations between these measures.

\subsubsection{Quantumness of single-system ensembles as quantumness of correlations}
\label{sec:quantsinglesystemensemble}

Given a generic ensemble $\cE=\left\{(p_i,\rho_S^{(i)})\right\}_{i=1}^n$ for a system $S$, one can associate with it a bipartite state $\rho_{SX}(\cE)=\sum_i p_i \rho_S^{(i)}\otimes \proj{i}_X$. Using the fact that for relative entropy~\cite{nielsenchuang,cover2012elements,piani2009relative}
\beq
\label{eq:directsumrelent}
S\left[\sum_i p_i \rho_S^{(i)}\otimes \proj{i}_X\right.\left\| \sum_i p_i \sigma_S^{(i)}\otimes \proj{i}_X\right]= \sum_i p_i S\left[\rho_S^{(i)}\right.\left\|  \sigma_S^{(i)}\right],
\eeq
and that the trace norm of a block diagonal matrix is equal to the sum of the trace norms of the blocks, i.e., $\left\|\bigoplus_i M_i \right\|_1 = \sum_i \left\| M_i \right\|_1$~\cite{bhatia1997matrix}, so that 
\beq
\label{eq:directsumtracedist}
\left\|\sum_i p_i \rho_S^{(i)}\otimes \proj{i}_X - \sum_i p_i \sigma_S^{(i)}\otimes \proj{i}_X\right\|_1 = \sum_i p_i \left\| \rho_S^{(i)}-\sigma_S^{(i)}\right\|_1,
\eeq
we have the identities
\begin{align}
Q_{D_1,\{\Pi_S\}}[\rho_{SX}(\cE)]&=Q_{D_1,\{\Pi\}}(\cE)= Q_{D_1,\{\Pi_S\otimes\Pi_X\}}[\rho_{SX}(\cE)],\label{eq:identitiestracenorm}\\
Q_{S,\{\Pi_S\}}[\rho_{SX}(\cE)]&= Q_{S,\{\Pi\}}(\cE)=Q_{S,\{\Pi_S\otimes\Pi_X\}}[\rho_{SX}(\cE)]\label{eq:identitiesrelent},
\end{align}
where the last equality in each of the above equations is due to the fact that $\Pi_X$ can be chosen to project in the basis $\{\ket{i}_X\}$. On the other hand the first equality in each equation holds independently of the class of operations $\cL$ considered, that is, for example,
\[
Q_{D_1,\cL_S}[\rho_{SX}(\cE)]=Q_{D_1,\cL}(\cE), \quad\forall \cL.
\]
The approach consisting in using tools originally introduced to quantify the quantumness of correlations to quantify the quantumness of ensembles was already put forward in, e.g., Refs.~\cite{ensembleLuo2010,ensembleLuo2011,yao2013quantum}, in particular making use of the relative entropy. Here we emphasize that this is a general fact that actually applies to any distance measure that respects a ``direct sum'' rule like~\eqref{eq:directsumrelent} and~\eqref{eq:directsumtracedist}, and fits in a general paradigm as the one we laid out in Section~\ref{sec:definitions}. For example, in the case where $S$ is a composite system itself, that is, $S=AB$, one can have similar relations for the (bipartite) quantumness of correlations of an ensemble of states $\left\{(p_i,\rho_{AB}^{(i)})\right\}$ of $AB$ and the (tripartite) quantumness of correlations of a tripartite state $\rho_{ABX}(\cE)=\sum_ip_i \rho^{(i)}_{AB}\otimes \proj{i}_X$, like
\[
Q_{D_1,\{\Pi_A\otimes\Pi_B\}}[\rho_{ABX}(\cE)]=Q_{D_1,\{\Pi_A\otimes\Pi_B\otimes\Pi_X\}}[\rho_{ABX}(\cE)]=Q_{D_1,\{\Pi_A\otimes\Pi_B\}}\left[\left\{(p_i,\rho_{AB}^{(i)})\right\}\right].
\]

\subsubsection{Quantumness of ensembles for a single mixed state and convex-roof constructions}

In Section~\ref{sec:quantcorrsinglestate} we have seen how our general ensemble quantifiers can reduce to quantifiers for a single state in the case of a trivial ensemble. There are other, less trivial ways to use our ensemble measures when we deal with a single state.

One possibility is that of considering ensemble realizations of that state. For example, we can consider an arbitrary pure-state ensemble $\cE(\rho)=\left\{(p_i,\proj{\psi^{(i)}})\right\}$ such that $\rho=\sum_i p_i \proj{\psi^{(i)}}$. The idea is then that of defining a single-state quantifier of quantum correlations that is based on a minimization over such a decomposition, i.e.,
\[
Q^\textrm{ens}_{D,\cL}(\rho):=\min_{\cE(\rho)} Q_{D,\cL}(\cE(\rho)).
\]
As discussed in Section~\ref{sec:basic}, if $D$ is jointly convex, then $Q_{D,\cL}(\cE)$ is monotonic under coarse-graining of $\cE$, and one has the relation
\[
Q^\textrm{ens}_{D,\cL}(\rho)\geq Q_{D,\cL}(\rho).
\]
We notice that a state that is classical on, let us say, $A$ will admit pure-state ensemble decompositions where all the pure states in the ensemble are classical in the same basis. So, for example, $Q_{D_1,\{\Pi_A\}}(\rho_{AB})=0$ implies $Q^\textrm{ens}_{D_1,\{\Pi_A\}}(\rho_{AB})=0$.

On the other hand, given a mixed state, one can again use pure-state ensemble realizations of that state, but consider the so-called convex-roof construction
\beq
\label{eq:Qconvexroof}
Q^\textrm{cr}_{D,\cL}(\rho):=\min_{\cE(\rho)} \sum_ip_i Q_{D,\cL}(\proj{\psi^{(i)}}).
\eeq
This construction is the standard one used to extend many entanglement measures from bi- and multi-partite states to mixed states~\cite{reviewent}, and indeed $Q^\textrm{cr}_{D,\cL}(\rho)$ is an entanglement measure if $Q_{D,\cL}(\proj{\psi})$ is an entanglement monotone on pure states $\ket{\psi}$~\cite{vidal2000entanglement,michalhorodeckientmeasures}. Note that in the following we often use the shorthand notation $\psi=\proj{\psi}$, hence writing $Q_{D,\cL}(\psi)$.

Notice that the difference between  $Q^\textrm{cr}_{D,\cL}(\rho)$ and  $Q^\textrm{ens}_{D,\cL}(\rho)$, both defined for a single state $\rho$, is that the infimum entering in Definition \eqref{eq:basicdefinition} is or is not, respectively, adapted to each element of the pure-state ensemble of $\rho$. This automatically implies
\beq
\label{eq:ensgeqcr}
Q^\textrm{ens}_{D,\cL}(\rho) \geq Q^\textrm{cr}_{D,\cL}(\rho),
\eeq
independently of the convexity properties of $D$.

\subsubsection{Quantumness and entanglement}

When we say that we are interested in a quantumness of correlations that is more general than entanglement, we have in mind a hierarchy that, above all, is qualitative, but may also be cast in quantitative terms, depending on the choice of quantifiers for entanglement and general quantumness. What we  expect is that the general quantumness of correlations be larger than entanglement also quantitatively. A generic approach that leads to a consistent quantitative hierarchy, i.e., a hierarchy in which the quantumness of correlations is always greater than entanglement, is the one based on the mapping of said quantumness into entanglement itself through a measurement interaction~\cite{streltsovmeasurement,piani2011all,gharibian2011characterizing,pianiadesso,coleshierarchy,nakano2013negativity,discordsciarrino}. Another such hierarchy is the one that naturally arises by considering distance measures from various sets that form a hierarchy themselves~\cite{modiunified}.

In the latter spirit, if one considers the relative entropy of entanglement~\cite{vedral1997quantifying,vedral1998entanglement}
\[
E_R\left[\rho_{AB}\right]:=\min_{\sigma_{AB}\textrm{ separable}} S[\rho_{AB}\|\sigma_{AB}]
\]
it is easy to see that $Q_{S,\{\Pi_A\}}(\rho_{AB})\geq E_R\left[\rho_{AB}\right]$, since $\Pi_A[\rho_{AB}]$ is necessarily separable for $\Pi_A$ a complete von Neumann measurement. On the other hand, $Q_{S,\{\Pi_A\}}$ and the entanglement of formation~\cite{bennett1996mixed}
\[
E_F\left[\rho_{AB}\right] := \min_{\cE(\rho_{AB})}\sum_ip_iS(\Tr_A(\proj{\psi_{AB}^{(i)}}))
\]
do not respect a hierarchy~\cite{luotwoqubits}, i.e., there exist states $\rho_{AB}$ and $\sigma_{AB}$  such that
\[
Q_{S,\{\Pi_A\}} (\rho_{AB})< E_F (\rho_{AB}) \quad\textrm{and}\quad Q_{S,\{\Pi_A\}} (\sigma_{AB})> E_F (\sigma_{AB}).
\]
We observe here that instead
\[
Q^\textrm{ens}_{S,\{\Pi_A\}}(\rho_{AB})\geq E_F\left[\rho_{AB}\right].
\]
Indeed, for a pure state, $Q_{S,\{\Pi_A\}}(\proj{\psi_{AB}})=S\left(\Tr_A(\proj{\psi_{AB}})\right)=E_F(\proj{\psi_{AB}})$~\cite{luodisturbance}, so that $Q^\textrm{cr}_{S,\{\Pi_A\}}(\rho_{AB})=E_F\left[\rho_{AB}\right]$, and we can invoke the general relation \eqref{eq:ensgeqcr}. 

\subsection{Entanglement monotones based on disturbance}

Here we prove that a large class of measures $Q_{D,\cL_A}$, including $Q_{S,\{\Pi_A\}}$ and  $Q_{D,\{\Pi_A\}}$, when restricted to single bipartite pure states (see Section~\ref{sec:quantcorrsinglestate}), are entanglement monotones. By this, we mean that said quantifiers  do not increase \emph{on average} under stochastic LOCC (SLOCC)~\cite{vidal2000entanglement}.

\begin{theorem}
\label{thm:maxquantum}
For any distance measure $D$ that
\begin{enumerate}
\item is invariant under unitaries,
\item is monotonic under general quantum operations (this condition actually comprises  (i)),
\item respects the ``flags'' condition~\footnote{See~\cite{horodecki2005simplifying} for the use of the "flags" condition in entanglement theory.}  $D(\sum_i p_i  \rho_i \otimes \proj{i},\sum_i p_i\sigma_i\otimes \proj{i}) =\sum_i p_i D(\rho_i,\sigma_i)$, for $\{p_i\}$ a probability distribution, $\{\rho_i\}$, $\{\sigma_i\}$ states, and $\{\ket{i}\}$ orthogonal flags,
\end{enumerate}
and for any class $\cL_A$ of local operations $\Lambda_A$  that is closed under conjugation by unitaries, i.e., if $\Lambda_A$ is in $\cL_A$ then also $U_A^\dagger\Lambda_A[U_A \cdot U_A^\dagger]  U_A$ is in $\cL_A$ for all $U_A$, one has that $Q_{D,\cL _A}(\psi_{AB})$ is an entanglement monotone on average, i.e.
 $$
 Q_{D,\cL_A}(\psi_{AB})\geq \sum_i p_i Q_{D,\cL_A}(\phi_{AB}^i),
 $$
 where $\{p_i, \phi_{AB}^i\}$ is a pure-state ensemble obtained from $\psi_{AB}$ by local operations and classical communication.

\end{theorem}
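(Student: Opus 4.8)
The plan is to reduce the claim to the concavity of a single symmetric function on the probability simplex, and then to get that concavity straight from the flags condition~(iii). Throughout, for a pure bipartite state $\ket\psi$ let $\vec\lambda_\psi$ denote its (zero‑padded) vector of Schmidt coefficients, and let $\ket{\psi_{\vec\lambda}}_{AB}=\sum_k\sqrt{\lambda_k}\,\ket k_A\ket k_B$.

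\emph{Step 1 ($Q_{D,\cL_A}$ on pure states depends only on $\vec\lambda_\psi$).} Hypothesis~(ii) (which contains~(i)) makes $D$ unitarily invariant, and $\cL_A$ is closed under conjugation by unitaries on $A$, so $Q_{D,\cL_A}(U_A\psi_{AB}U_A^\dagger)=Q_{D,\cL_A}(\psi_{AB})$; invariance under $I_A\otimes U_B$ is immediate because $I_A\otimes U_B$ commutes with every $\Lambda_A\in\cL_A$. Applying~(ii) to the channels ``append a fixed pure ancilla on $B$'' and ``trace it out'' (equivalently, the one‑term instance of~(iii), $D(\rho\otimes\proj0,\sigma\otimes\proj0)=D(\rho,\sigma)$) shows $Q_{D,\cL_A}$ is unchanged when an ancilla is adjoined to Bob. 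Hence, for pure bipartite states of arbitrary local dimension, $Q_{D,\cL_A}(\psi_{AB})=f(\vec\lambda_\psi)$ for a single permutation‑symmetric function $f$ of probability vectors. It then suffices to prove $f$ concave: a symmetric concave function is Schur‑concave, and the claim follows in Step~3.

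\emph{Step 2 (concavity of $f$).} Fix probability vectors $\vec\lambda_1,\vec\lambda_2$, $q\in[0,1]$, and set $\vec\mu=q\vec\lambda_1+(1-q)\vec\lambda_2$. Adjoin a qubit $B'$ to Bob and consider
\[
\ket\Phi_{ABB'}=\sqrt q\,\ket{\psi_{\vec\lambda_1}}_{AB}\ket1_{B'}+\sqrt{1-q}\,\ket{\psi_{\vec\lambda_2}}_{AB}\ket2_{B'}.
\]
Orthogonality of the flags gives $\Tr_{BB'}\proj\Phi=q\,\mathrm{diag}(\vec\lambda_1)+(1-q)\,\mathrm{diag}(\vec\lambda_2)=\mathrm{diag}(\vec\mu)$, so $\ket\Phi$ has Schmidt spectrum $\vec\mu$ across the cut $A\,|\,BB'$ and $Q_{D,\cL_A}(\proj\Phi)=f(\vec\mu)$ by Step~1. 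Let $\Delta_{B'}$ dephase $B'$ in $\{\ket1,\ket2\}$; it acts on a subsystem disjoint from $A$, hence commutes with every $\Lambda_A\in\cL_A$, and $\Delta_{B'}[\proj\Phi]=\sigma_{ABB'}:=q\,\proj{\psi_{\vec\lambda_1}}_{AB}\otimes\proj1_{B'}+(1-q)\,\proj{\psi_{\vec\lambda_2}}_{AB}\otimes\proj2_{B'}$. For any $\Lambda_A$, monotonicity of $D$ under $\Delta_{B'}$ gives $D(\sigma_{ABB'},\Lambda_A[\sigma_{ABB'}])=D(\Delta_{B'}[\proj\Phi],\Delta_{B'}[\Lambda_A[\proj\Phi]])\le D(\proj\Phi,\Lambda_A[\proj\Phi])$, so $Q_{D,\cL_A}(\sigma_{ABB'})\le f(\vec\mu)$. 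Conversely $\sigma_{ABB'}$ has the form to which~(iii) applies (flags on $B'$), whence $D(\sigma_{ABB'},\Lambda_A[\sigma_{ABB'}])=q\,D(\psi_{\vec\lambda_1},\Lambda_A[\psi_{\vec\lambda_1}])+(1-q)\,D(\psi_{\vec\lambda_2},\Lambda_A[\psi_{\vec\lambda_2}])\ge q f(\vec\lambda_1)+(1-q)f(\vec\lambda_2)$ for every $\Lambda_A$, so $Q_{D,\cL_A}(\sigma_{ABB'})\ge q f(\vec\lambda_1)+(1-q)f(\vec\lambda_2)$. Comparing the two bounds yields $f(\vec\mu)\ge q f(\vec\lambda_1)+(1-q)f(\vec\lambda_2)$.

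\emph{Step 3 (conclusion).} Being symmetric and concave, $f$ is Schur‑concave, i.e.\ $\vec\lambda\prec\vec\nu$ implies $f(\vec\lambda)\ge f(\vec\nu)$. By the majorization criterion for pure‑state LOCC transformations~\cite{vidal2000entanglement,reviewent}, an ensemble $\{p_i,\phi^i_{AB}\}$ is obtainable from $\ket{\psi_{AB}}$ by LOCC iff $\vec\lambda_\psi\prec\sum_i p_i\vec\lambda_{\phi^i}$; hence
\[
Q_{D,\cL_A}(\psi_{AB})=f(\vec\lambda_\psi)\ge f\!\Big(\sum_i p_i\vec\lambda_{\phi^i}\Big)\ge\sum_i p_i f(\vec\lambda_{\phi^i})=\sum_i p_i Q_{D,\cL_A}(\phi^i_{AB}),
\]
using Schur‑concavity then concavity, which is the assertion. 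The hypotheses are met, in particular, by $D=D_1$ and $D=S$ (both unitarily invariant, monotone under channels, and satisfying the flags condition, cf.\ \eqref{eq:directsumrelent}, \eqref{eq:directsumtracedist}) and by $\cL_A=\{\Pi_A\}$ (closed under conjugation by unitaries on $A$), recovering the monotones quoted in the statement.

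\emph{Expected main obstacle.} The delicate point is Step~1: one really needs $Q_{D,\cL_A}$, restricted to pure states, to be a function of the Schmidt spectrum \emph{independently of the local dimensions}, for which mere unitary invariance does not suffice and one must use monotonicity of $D$ together with the single‑flag case of~(iii). The other decisive choice is to put the ``which‑state'' register on \emph{Bob}'s side, so that dephasing it commutes with the one‑sided class $\cL_A$; this is precisely what lets the flags condition do the work. (An alternative that avoids invoking the abstract ``symmetric concave $=$ monotone'' correspondence runs the same construction with a larger classical register together with the elementary fact that $\vec\lambda\prec\sum_i p_i\vec\lambda_i$ iff there exist unitaries $V_i$ with $\sum_i p_i V_i\,\mathrm{diag}(\vec\lambda_i)\,V_i^\dagger=\mathrm{diag}(\vec\lambda)$, which produces the target ensemble $\{p_i,\phi^i\}$ alongside a state of Schmidt spectrum exactly $\vec\lambda_\psi$ and yields the bound in one step.)
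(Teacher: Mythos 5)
Your proof is correct, but it takes a genuinely different route from the one in the paper. You reduce the statement to showing that $Q_{D,\cL_A}$, restricted to pure states, is a permutation-symmetric \emph{concave} function $f$ of the Schmidt spectrum (your Steps 1--2, with the concavity extracted from a superposed-flag state on $B'$, dephasing on $B'$, monotonicity of $D$, and the flags condition), and then you conclude via Schur-concavity together with the Jonathan--Plenio majorization criterion for pure-state-to-ensemble LOCC transformations. The paper instead proves average monotonicity directly: it reduces to unilocal operations (the standard Vidal reduction), exploits the symmetry $Q_{D,\cL_A}(\psi)=Q_{D,\cL_B}(\psi)$ on pure states to put the measurement class on the side opposite to the Kraus operators, attaches the ``which-outcome'' flag on the measured party's side, and uses monotonicity of $D$ plus the flags condition plus $\min\sum\geq\sum\min$. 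The two arguments are mirror images in where the classical register sits, but your key lemma (concavity of $f$) is different in kind from the paper's key step (average monotonicity under a single unilocal instrument), and the LOCC bookkeeping is outsourced in your case to the majorization theorem rather than to the adaptive-unilocal decomposition of LOCC. What your route buys: you avoid having to define the mirrored class $\cL_B$ (the paper's Eq.~\eqref{eq:sameonboth} implicitly assumes the class makes sense on both sides), and you obtain extra structural information for free---$f$ is concave and Schur-concave, which immediately gives deterministic monotonicity and maximality on maximally entangled states (the paper's Corollary), essentially recovering Vidal's characterization of pure-state monotones. What the paper's route buys: it is self-contained apart from the unilocal reduction, needing neither concavity nor the Jonathan--Plenio theorem. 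Two small points to tighten: in Step 3 the majorization condition $\vec\lambda_\psi\prec\sum_i p_i\vec\lambda_{\phi^i}$ holds for the \emph{decreasingly ordered} Schmidt vectors of the $\phi^i$, so state it with sorted vectors (harmless for you since $f$ is symmetric and concavity applies to the sorted representatives); and your Step 1 claim of Bob-dimension independence is indeed needed and correctly argued, but Alice's system should be understood as fixed throughout, since $\cL_A$ is tied to a fixed $\cH_A$---an assumption the paper shares implicitly.
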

\begin{proof}
We will first note that  $ Q_{D,\cL_A}(\psi_{AB})$  depends only on the Schmidt coefficients $\{p_i\}$ of 
$\ket{\psi_{AB}}=U_A\otimes U_B \ket{\phi_{AB}}$ with $\ket{\phi_{AB}}=\sum\sqrt{p_i}\ket{i i}_{AB}$. This follows from the unitary invariance of $D$ and from  $U_A^\dagger \Lambda_A[U_A \cdot U_A^\dagger ] U_A=\Lambda'_A[ \cdot ] \in\cL_A$ :
\[
\begin{split}
Q_{D,\cL_A}(\psi_{AB}) &=\min_{\Lambda_A\in\cL_A}D(U_A^\dagger \otimes U_B^\dagger \psi_{AB} U_A\otimes U_B,U_A^\dagger \otimes U_B^\dagger \Lambda_A[\psi_{AB}]U_A\otimes U_B)\\
&=\min_{\Lambda_A'\in\cL_A}D({\phi_{AB}},\Lambda'_A[{\phi_{AB}}]))=f(\{p_i\}).
\end{split}
\]
Given the symmetry of the state $\ket{\phi_{AB}}$  if is clear that
\beq
\label{eq:sameonboth}
Q_{D,\cL_A}(\psi_{AB})=\min_{\Lambda_A\in\cL_A}D(\psi_{AB},\Lambda_A[{\psi_{AB}}])=\min_{\Lambda_B\in\cL_B}D(\psi_{AB},\Lambda_B[{\psi_{AB}}])=Q_{D,\cL_B}(\psi_{AB}),
\eeq
i.e., we can consider indifferently a minimization over projections on Alice's or Bob's side. 
In order to prove monotonicity on average, it is sufficient to prove monotonicity under unilocal transformations, i.e. stochastic operations defined through
\beq
\label{eq:unilocal}
\ket{\phi^i_{AB}}=C_A^i\ket{\psi}_{AB}/\sqrt{p_i}\qquad p_i = \Tr( C_A^i\proj{\psi}C_A^{i\dagger}),
\eeq
where the $C_A^i$'s are the Kraus operators of a generic quantum operation on Alice's side. If monotonicity holds under such operations, and the same holds for operations on Bob's side, then monotonicity on average under general LOCC follows, because an LOCC protocol is just a sequence of adaptive unilocal operations~\cite{vidal2000entanglement}. To this end,
\[
\begin{split}
Q_{D,\cL_B}(\psi_{AB})
&=\min_{\Lambda_B\in\cL_B}D(\psi_{AB}, \Lambda_B[{\psi_{AB}}])\\
&\geq \min_{\Lambda_B\in\cL_B}D(\sum_iC^i_A \psi_{AB} C_A^{i\dagger}\otimes\proj{i}_{A'},\sum_iC^i_A\Lambda_B[{\psi_{AB}}]C_A^{i\dagger}\otimes\proj{i}_{A'})\\
&=\min_{\Lambda_B\in\cL_B}D(\sum_i p_i \phi_{i}\otimes\proj{i}_{A'},\sum_i p_i \Lambda_B[{\phi_{i}}]\otimes\proj{i}_{A'})\\
&=\min_{\Lambda_B\in\cL_B}\sum_i p_i D( \phi_{i},\Lambda_B[{\phi_{i}}]))\\
&\geq  \sum_i p_i \min_{\Lambda_B\in\cL_B}D( \phi_{i},\Lambda_B[{\phi_{i}}])\\
&= \sum_i p_i Q_{D,\cL_B}(\phi_{i}).
\end{split}
\]
The first inequality is due to monotonicity of the  distance $D$ under quantum operations, in this case the application of local Kraus operators, with the ``which-operator'' information stored in a classical local flag. Notice that the quantum operation on Alice and the projection on Bob commute. Simply moving the measuring operation to Alice's side thanks to \eqref{eq:sameonboth}, similar steps can be taken in the case of a unilocal operation on Bob's side.

\end{proof}
When particularized to the trace-distance we will call $Q_{D_1,\{\Pi_A\}}$ on pure states the \emph{entanglement of disturbance}, and denote it $E^\textrm{disturbance}$ (see next section).
$Q_{D,\cL_A}(\psi_{AB})$ being an entanglement monotone for pure states, a  straightforward extension to mixed states is provided by its convex roof, $Q^\textrm{cr}_{D_1,\{\Pi_A\}}(\rho_{AB})$,
where, we recall, the superscript $^\textrm{cr}$ indicates that we are taking the convex roof over pure-state ensembles of $\rho_{AB}$, see Eq. \eqref{eq:Qconvexroof}. We remark that Bravyi~\cite{Bravyi2003} also studied and quantified entanglement of pure states as a property that implies non-vanishing disturbance under local measurements, but chose entropy---equivalently, relative entropy---as disturbance quantifier.

It will be important in this work to establish bounds on the quantumness of states and ensembles.
For this purpose it is useful to note that \emph{maximal quantumness corresponds to maximal entanglement}.

\begin{corollary}
For a given fixed dimension of $A$, $d_A$, maximally entangled states are maximally quantum-correlated with respect to any measure $Q_{D,\cL_A}(\rho_{AB})$ that respects the conditions of Theorem~\ref{thm:maxquantum}.
\end{corollary}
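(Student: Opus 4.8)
The plan is to deduce the statement from Theorem~\ref{thm:maxquantum} together with Nielsen's majorization criterion for deterministic LOCC convertibility of pure states, supplemented by monotonicity of $D$ under partial trace in order to handle mixed targets. Fix $d_A$, let $\ket{\Phi_{d_A}}$ be the maximally entangled state of local dimension $d_A$, and let $\rho_{AB}$ be an arbitrary state with $\dim\cH_A=d_A$; the goal is $Q_{D,\cL_A}(\rho_{AB})\le Q_{D,\cL_A}(\Phi_{d_A})$.

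First I would choose a purification $\ket{\Psi}_{ABC}$ of $\rho_{AB}$ and regard the ancilla $C$ as held by Bob alongside $B$. Across the bipartition $A\,|\,BC$ this pure state has Schmidt rank $\mathrm{rank}(\rho_A)\le d_A$, and the flat Schmidt vector of $\ket{\Phi_{d_A}}$ is majorized by every probability vector of length at most $d_A$; hence Nielsen's theorem~\cite{nielsenchuang} provides a deterministic LOCC transformation $\ket{\Phi_{d_A}}\to\ket{\Psi}_{A(BC)}$ (here $\Phi_{d_A}$ is held across $A$ and a $d_A$-dimensional Bob register, which Bob's local operations then turn into $BC$). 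Applying Theorem~\ref{thm:maxquantum} to the singleton output ensemble $\{(1,\Psi_{A(BC)})\}$ gives $Q_{D,\cL_A}(\Phi_{d_A})\ge Q_{D,\cL_A}(\Psi_{A(BC)})$. Next I would discard the ancilla: any $\Lambda_A\in\cL_A$ acts on $A$ only, so it commutes with $\Tr_C$ and $\Lambda_A[\rho_{AB}]=\Tr_C(\Lambda_A[\Psi_{ABC}])$; by monotonicity of $D$ under the channel $\Tr_C$ (hypothesis (ii) of Theorem~\ref{thm:maxquantum}) we obtain $D(\rho_{AB},\Lambda_A[\rho_{AB}])\le D(\Psi_{ABC},\Lambda_A[\Psi_{ABC}])$ for each $\Lambda_A$, and taking the infimum over $\Lambda_A\in\cL_A$ yields $Q_{D,\cL_A}(\rho_{AB})\le Q_{D,\cL_A}(\Psi_{A(BC)})$. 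Chaining the two inequalities proves the claim. (For a pure target the ancilla step is empty; applying the same bound termwise to a pure-state decomposition of $\rho_{AB}$ also caps the convex-roof extension $Q^{\mathrm{cr}}_{D,\cL_A}$, and averaging then caps the corresponding ensemble measures.)

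The only delicate point is the legitimacy of the ``Bob holds $BC$'' viewpoint: since $Q_{D,\cL_A}$ restricts operations on the $A$-side alone, enlarging Bob's system changes nothing, and---crucially---the Schmidt-rank bound feeding Nielsen's criterion depends only on $\dim\cH_A$, never on $\dim\cH_B$, which is precisely why fixing $d_A$ by itself is enough for the corollary. Beyond this bookkeeping I do not anticipate a genuine obstacle, since everything else follows directly from the hypotheses already packaged into Theorem~\ref{thm:maxquantum}.
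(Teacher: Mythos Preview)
Your proposal is correct and follows essentially the same strategy as the paper. The paper's proof invokes the result of~\cite{streltsov2012general} that any correlation measure monotone under one-sided operations is maximized on pure states, notes that $B$-side monotonicity of $Q_{D,\cL_A}$ follows from monotonicity of $D$, and then observes that every pure state on $AB$ is LOCC-reachable from a maximally entangled state of Schmidt rank $d_A$; you unpack the first step explicitly via purification and partial trace and invoke Nielsen's criterion for the second, so the two arguments are structurally identical, yours simply being self-contained rather than relying on the external citation.
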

\begin{proof}
In \cite{streltsov2012general} it was already proven that a measure of correlations $Q$ that does not increase under operations on at least one-side must be maximal on pure states. It is easy to verify that monotonicity under operations of $D$ implies monotonicity under operations on $B$ for any $Q_{D,\cL_A}$. Furthermore, any pure state of $A$ and $B$ can be obtained via LOCC---in particular, with one-way communication from Bob to Alice---from a maximally entangled state of Schmidt rank $d_A$.
\end{proof}

\section{Trace-norm disturbance: analysis and bounds}
\label{sec:bounds}

Our main goal here is that of finding non-trivial bounds on $Q_{D,\cL}(\mathcal{E})$, focusing on measures like $Q_{D_1,\{\Pi_A\}}$ and $Q_{D_1,\{\Pi_A\otimes\Pi_B\}}$. The latter have the interpretation of quantifiers of the quantumness of correlations in terms of measurement-induced disturbance, where the change is quantified by means of a distance with an operational meaning---the trace distance. This will make such measures key in the connection between ensemble quantumness and quantum data hiding that we will establish in Section~\ref{sec:datahiding}.

To find bounds on ensemble quantumness, we notice that, in general, if $D$ is jointly convex
\beq
\label{eq:generalboundensemble}
\begin{split}
Q_{D,\cL}[\cE]&=\min_{\Lambda\in\cL}\sum_{i=1}^np_iD\left[\rho^{(i)},\Lambda[\rho^{(i)}]\right]\\
&\leq\min_{\Lambda\in\cL}\max_{\ket{\psi}}
D \left[
	\proj{\psi},
	\Lambda[\proj{\psi}]
    \right].
\end{split}
\eeq
For example, we want to calculate
\[
\min_{\Pi}\max_{\ket{\psi}} \frac{1}{2}\| \proj{\psi}-\Pi[\proj{\psi}]\|_1=\max_{\ket{\psi}} \frac{1}{2}\| \proj{\psi}-\bar{\Pi}[\proj{\psi}]\|_1
\]
for the sake of bounding $Q_{D_1,\{\Pi\}}$. Here we got rid of the minimization over the projector, since any change of basis for the projection is irrelevant once we consider the maximization over pure states. Similarly, for the sake of  bounding $Q_{D_1,\{\Pi_A\}}$, it will be enough to calculate $\max_{\ket{\psi_{AB}}} \frac{1}{2}\| \proj{\psi_{AB}}-\bar{\Pi}_A[\proj{\psi_{AB}}]\|_1$.

To begin, it is worth considering with more attention the single-state measures $Q_{D_1,\{\Pi_A\}}$ of Eq.~\eqref{eq:tracegeomonesided} and $Q_{D_1,\{\Pi_A\otimes\Pi_B\}}$ of Eq.~\eqref{eq:tracegeomtwosided}, evaluated on pure states. On one hand, consideration of such measures is interesting in itself; on the other hand, we will see that the tools that we will develop will be useful also to bound ensemble quantumness.

\subsection{Trace-norm disturbance for pure states: Entanglement of disturbance}
\label{sec:entmeasure}

The following lemma will be the key in the study of the maximum disturbances induced by a complete projective measurement on system $A$, for the case of a bipartite pure state.

\begin{lemma}
\label{lem:implicitc}
In the case of a bipartite pure state $\ket{\psi}_{AB}$, the disturbance caused by a one-sided complete projective measurement $\Pi_A$ on $A$, as measured by the trace distance, is given by the positive $c$ such that
\beq
\label{eq:impliciteq}
\sum_i \frac{p_i}{c+p_i} =1,
\eeq
for $p_i=\bra{i}\rho_A\ket{i}$, with $\rho_A=\Tr_B{\proj{\psi}}$, the probability of obtaining outcome $i$ by measuring in the local orthonormal basis $\{\ket{i}\}$.
\end{lemma}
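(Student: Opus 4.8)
The plan is to compute the positive part of the Hermitian, traceless operator $X:=\proj{\psi}-\Pi_A[\proj{\psi}]$, since $D_1[\proj{\psi},\Pi_A[\proj{\psi}]]=\tfrac12\|X\|_1$ equals the sum of the positive eigenvalues of $X$. First expand $\ket{\psi}_{AB}=\sum_i\sqrt{p_i}\,\ket{i}_A\ket{q_i}_B$ in the measuring basis $\{\ket{i}\}$, where the $\ket{q_i}$ are normalized (in general non-orthogonal) vectors on $B$ and $p_i=\bra{i}\rho_A\ket{i}$; the indices with $p_i=0$ may be discarded. Then $\Pi_A[\proj{\psi}]=\sum_i p_i\,\proj{i}_A\otimes\proj{q_i}_B=\sum_i p_i\proj{e_i}$ with $\ket{e_i}:=\ket{i}_A\ket{q_i}_B$, and since $\ket{\psi}=\sum_i\sqrt{p_i}\ket{e_i}$ both operators are supported on $V:=\myspan\{\ket{e_i}\}$. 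The key structural observation is that $\{\ket{e_i}\}$ is an \emph{orthonormal} basis of $V$: orthonormality $\braket{i}{j}=\delta_{ij}$ on $A$ already forces $\braket{e_i}{e_j}=\delta_{ij}$, regardless of the overlaps $\braket{q_i}{q_j}$ on $B$. Hence, in this basis $X$ is represented by the $n\times n$ matrix $\mathbf{v}\mathbf{v}^{\mathsf T}-\mathrm{diag}(p_1,\dots,p_n)$, $\mathbf{v}=(\sqrt{p_1},\dots,\sqrt{p_n})^{\mathsf T}$, i.e.\ a rank-one perturbation of a diagonal matrix.

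The next step is to diagonalize this $n\times n$ matrix. By the matrix-determinant lemma, for $\lambda\notin\{-p_i\}$,
\[
\det(X-\lambda I)=(-1)^n\Big[\prod_i(p_i+\lambda)\Big]\Big(1-\sum_i\frac{p_i}{p_i+\lambda}\Big),
\]
so (in the generic case of distinct $p_i$) every eigenvalue of $X$ solves $g(\lambda):=\sum_i p_i/(p_i+\lambda)=1$. A monotonicity/sign analysis of $g$ then finishes the argument: $g$ is strictly decreasing between consecutive poles; on $(-\min_i p_i,\infty)$ it decreases from $+\infty$ to $0$ and passes through $g(0)=n$, so for $n\ge2$ it hits $1$ at a single point $c>0$; on each of the $n-1$ bounded intervals between consecutive poles (all lying in $(-\infty,0)$) $g$ ranges over all of $\bbR$ and produces one negative root; on the leftmost unbounded interval $g<0$ and there is no root. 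Thus $X$ has exactly one nonnegative eigenvalue, namely the $c$ of \eqref{eq:impliciteq}, whence $\tfrac12\|X\|_1=c$. (For $n=1$, i.e.\ $\ket\psi$ a product state, the equation gives $c=0$ and indeed $X=0$.) As a consistency check, the maximally entangled state in dimension $d$ has all $p_i=1/d$ and yields $c=1-1/d$, while the two-qubit state with Schmidt coefficients $\lambda,1-\lambda$ yields $c=\sqrt{\lambda(1-\lambda)}$, in agreement with the values quoted earlier.

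The one place that needs genuine care — and the step I expect to be the main obstacle in a fully rigorous write-up — is degeneracy among the $p_i$, where the clean count of eigenvalues above breaks down. If a value $a$ occurs with multiplicity $m$, then each difference $\ket{e_i}-\ket{e_j}$ within that block is an eigenvector of $X$ with eigenvalue $-a<0$, contributing $m-1$ negative eigenvalues; on the orthogonal complement of all such difference vectors, $X$ reduces to a matrix of the same form but with the distinct values $a_k$ and weights $w_k=m_k a_k$, whose associated rational function $\sum_k w_k/(a_k+\lambda)$ is again precisely $g$. Hence the conclusion — a single nonnegative eigenvalue, equal to the root of \eqref{eq:impliciteq} — is unchanged, and the lemma follows. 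Alternatively, one can avoid this case distinction altogether by a continuity argument, both sides of the claimed identity being continuous functions of $(p_1,\dots,p_n)$.
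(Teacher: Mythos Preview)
Your proof is correct and arrives at the same reduced eigenvalue problem as the paper, but by a different and in some ways cleaner route. The paper first invokes the general fact that $\proj{\psi}-\Pi_A[\proj{\psi}]$, being a pure state minus a positive operator, has positive part of rank at most one, so that $\tfrac12\|X\|_1=\|X\|_\infty$; it then computes $\|X\|_\infty$ variationally over test vectors $\ket{\phi}=\sum_i\ket{i}\ket{z_i}$, optimizes the $\ket{z_i}$ by a Cauchy--Schwarz-type argument, and only then lands on the $n\times n$ matrix $\proj{p}-\mathrm{diag}(p)$ whose top eigenvalue satisfies \eqref{eq:impliciteq}. Your observation that the vectors $\ket{e_i}=\ket{i}_A\ket{q_i}_B$ are automatically orthonormal (orthogonality being enforced entirely on the $A$ side) short-circuits all of this: $X$ is \emph{directly} the matrix $\mathbf{v}\mathbf{v}^{\mathsf T}-\mathrm{diag}(p)$ in that basis, with no intermediate optimization needed. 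The price you pay is that, not having the rank-one observation, you must analyze the full spectrum via the secular equation to establish that there is exactly one nonnegative eigenvalue---hence the degeneracy discussion, which the paper avoids entirely. Either way is sound; your reduction to the matrix problem is more transparent, while the paper's rank-one argument makes the eigenvalue counting trivial.
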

\begin{proof}
Let $\ket{\psi}=\sum_i \ket{i}_A\ket{w_i}_B$, with the vectors $\ket{w_i}$ not necessarily orthogonal and satisfying $\braket{w_i}{w_i}=p_i$. Then $\Pi_A[ \proj{\psi} ] =\sum_i \proj{i}\otimes\proj{w_i}$. We observe that for any vector $\ket v$ and any positive-semidefinite $A$, the positive-definite part of $\proj v-A$ can have rank at most 1, and consequently, if $\proj v-A$ is traceless, then $\|\proj v-A\|_1=2\|\proj v-A\|_\infty$. In our case then,
\[
\frac{1}{2}\|\proj{\psi} - \Pi_A[\proj{\psi}]\|_1 = \|\proj{\psi} - \Pi_A[\proj{\psi}]\|_\infty.
\]
To further simplify things, let us consider a generic normalized vector $\ket{\phi}=\sum_i \ket{i}\ket{z_i}$, where, similarly as before, the vectors $\ket{z_i}$ are not necessarily orthogonal and satisfy $\braket{z_i}{z_i}=q_i$, with $q_i$ the probability of the outcome $i$ when measuring $A$ in the basis $\{\ket{i}\}$. One has
\[
\begin{split}
\|\proj{\psi} - \Pi_A[\proj{\psi}]\|_\infty 	&= \max_{\ket{\phi}} \bra{\phi}(\proj{\psi} - \Pi_A[\proj{\psi}])\ket{\phi}\\
							&= \max_{\ket{\phi}} \left|\sum_i \braket{z_i}{w_i}\right|^2 - \sum_i |\braket{z_i}{w_i}|^2 \\
							&=\max_{\ket{\phi}} \sum_{i> j} \left(\braket{z_i}{w_i}\braket{w_j}{z_j} +\braket{z_j}{w_j}\braket{w_i}{z_i}\right) \\
							&=2\max_{\ket{\phi}} \sum_{i> j}\Re({\braket{z_i}{w_i}\braket{w_j}{z_j}})\\
							&=2\max_{\ket{q}} \sum_{i> j}\sqrt{q_ip_iq_jp_j}\\
							&= \max_{\ket{q}} \left(\sum_i \sqrt{q_ip_i}\right)^2 - \sum_i q_i p_i\\
							&=\max_{\ket{q}} \bra{q} \left(\proj{p} - \sum_i p_i \proj{i} \right)\ket{q},
\end{split}
\]
where we have introduced the notation $\ket{p}=\sum_i \sqrt{p_i} \ket{i}$ (similarly for $\ket{q}$), and used that $\Re({\braket{z_i}{w_i}\braket{w_j}{z_j}})\leq\sqrt{q_ip_iq_jp_j}$, with equality achieved for $\ket{z_i}=(\sqrt{q_i}/\sqrt{p_i}) \ket{w_i}$. Thus, it is sufficient to find the largest (positive) eigenvalue of the operator $\proj{p} - \sum_i p_i \proj{i} $, which we know will be achieved by  $\ket{q}$ of the form $\ket{q}=\sum_i \sqrt{q_i} \ket{i}$. With this ansatz, we set 
\[
\left(\proj{p} - \sum_i p_i \proj{i}\right) \ket{q} = c \ket{q},
\]
from which we find the scalar relations
\[
\braket{p}{q} \sqrt{p_i} - p_i \sqrt{q_i} = c \sqrt{q_i},
\]
i.e.,
\[
\sqrt{q_i} = \braket{p}{q} \frac{\sqrt{p_i}}{c + p_i}.
\]
Imposing consistency for $ \sum_i \sqrt{p_iq_i} =  \braket{p}{q} $,  i.e., imposing
\[
\sum_i\sqrt{p_iq_i} = \sum_i \sqrt{p_i}\left(\braket{p}{q} \frac{\sqrt{p_i}}{c + p_i}\right) = \braket{p}{q} \sum_i\frac{p_i}{c + p_i}\equiv \braket{p}{q},
\]
and noticing that  \eqref{eq:impliciteq} is monotonically decreasing (and hence invertible) for $c\geq 0$, we arrive at the condition of the statement.
\end{proof}
We now make another preliminary observation.
\begin{lemma}
\label{lem:schurconcavity}
Let $\{p_i\}$ indicate a probability distribution. Then 
\[
f_c(\{p_i\}):=\sum_i \frac{p_i}{c + p_i}
\]
is Schur-concave in $\{p_i\}$ for fixed $c\geq 0$. It is also monotonically decreasing for fixed $\{p_i\}$ and for increasing positive $c$. Define also the function
\[
c(\{p_i\}):= \textup{the unique }c\geq0 \textup{ such that } f_c(\{p_i\})=\sum_i \frac{p_i}{c + p_i} =1
\]
on probability vectors. The function $c(\{p_i\})$ is Schur-concave in $\{p_i\}$.
\end{lemma}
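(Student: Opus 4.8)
The plan is to handle the three assertions in the order stated, using the first two to bootstrap the third. For the Schur-concavity of $f_c$ at a fixed $c\ge 0$, I would sidestep the Schur--Ostrowski differential criterion (which is delicate on the boundary of the simplex, where some $p_i$ vanish) and instead note that $f_c$ is of the form $\sum_i g_c(p_i)$ with $g_c(p):=p/(c+p)$. A one-line computation gives $g_c''(p)=-2c/(c+p)^3\le 0$ for $p\ge 0$, so $g_c$ is concave on $[0,1]$; and a symmetric separable sum $\sum_i g(x_i)$ of a concave function $g$ is Schur-concave (Hardy--Littlewood--P\'olya). At the edge case $c=0$ one reads $g_0\equiv 1$ on $p>0$ and $g_0(0)=0$, so $f_0(\{p_i\})=|\supp\{p_i\}|$, which is plainly Schur-concave. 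This settles the first statement.

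The monotonic decrease in $c$ is immediate: for fixed $\{p_i\}$ with some $p_i>0$, $\partial_c f_c(\{p_i\})=-\sum_i p_i/(c+p_i)^2<0$. Combined with continuity, with $f_0(\{p_i\})=|\supp\{p_i\}|\ge 1$, and with $f_c(\{p_i\})\to 0$ as $c\to\infty$, this shows that the equation $f_c(\{p_i\})=1$ appearing in \eqref{eq:impliciteq} has a unique root $c(\{p_i\})\ge 0$; hence $c(\cdot)$ is well defined on probability vectors, and is strictly positive exactly when $\{p_i\}$ is not a point mass.

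For the Schur-concavity of $c(\cdot)$ I would argue by transfer. Let $\{p_i\}\prec\{q_i\}$, i.e.\ $\{q_i\}$ majorizes $\{p_i\}$. Applying the already-proven Schur-concavity of $f_c$ at the fixed value $c=c(\{q_i\})$ gives
\[
f_{c(\{q_i\})}(\{p_i\})\ \ge\ f_{c(\{q_i\})}(\{q_i\})\ =\ 1\ =\ f_{c(\{p_i\})}(\{p_i\}).
\]
Thus $c\mapsto f_c(\{p_i\})$ is $\ge 1$ at $c=c(\{q_i\})$ and equals $1$ at $c=c(\{p_i\})$; since this map is strictly decreasing, we must have $c(\{q_i\})\le c(\{p_i\})$. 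Therefore $c(\cdot)$ is Schur-concave.

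I do not expect a substantial obstacle in this lemma; the only points needing mild care are (a) the behaviour on the boundary of the simplex, where $p_i/(c+p_i)$ must be interpreted as its limit $0$ when $p_i=0$ (this is precisely why the separable-concave-sum argument is cleaner than Schur--Ostrowski here), and (b) ensuring \emph{strict} monotonicity of $c\mapsto f_c(\{p_i\})$ so that the displayed comparison genuinely forces $c(\{q_i\})\le c(\{p_i\})$ --- strictness holds as soon as $\{p_i\}$ has at least two nonzero entries, and when $\{p_i\}$ is a point mass the claim is trivial since then $c(\{p_i\})=0$ is already the minimal possible value.
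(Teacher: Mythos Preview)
Your proposal is correct and follows essentially the same route as the paper: Schur-concavity of $f_c$ via the fact that it is a symmetric sum of the concave function $x\mapsto x/(c+x)$, evident monotonicity in $c$, and then a transfer argument combining the two to obtain Schur-concavity of $c(\cdot)$. The only cosmetic difference is that the paper phrases the last step via bistochastic mixing and evaluates $f$ at $c(\{p_i\})$ rather than $c(\{q_i\})$; your treatment of the boundary/point-mass edge cases is more explicit than the paper's, but not a different idea.
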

\begin{proof}
The Schur-concavity of $f_c$ is a simple consequence of the concavity of $x/(c+x)$ in $x\geq0$ for $c\geq0$, and of the symmetry of $f_c$ in the $p_i$'s. Monotonicity in $c$ is evident.

Consider now a probability distribution $p'_i=\sum_j B_{ij} p_j$ obtained from the probability distribution ${p_i}_i$ by multiplication by a bistochastic matrix $B_{ij}$. Because of the Schur concavity of  $f_c(\{p_i\})$ in $\{p_i\}$ for fixed $c$, we have
\[
\begin{split}
1 &= 	f_{c(\{p_i\})}(\{p_i\})\\
	&=	\sum_j \frac{p_j}{c(\{p_i\}) + p_j}\\
	&\leq  \sum_i \frac{p'_i}{c(\{p_i\}) + p'_i}\\
	& = f_{c(\{p_i\})}(\{p'_i\}).
\end{split}
\] 
Because of the monotonicity of $f_c(\{p_i\})$ in $c$ for fixed $\{p_i\}$, we conclude that $c(\{p'_i\})\geq{c}(\{p_i\})$. This proves that $c(\{p_i\})$ is a Schur-concave function of $\{p_i\}$. 
\end{proof}
Thus we arrive at:
\begin{theorem}
\label{thm:cmonotone}
The one-sided trace-norm disturbance
\[
Q_{D_1,\{\Pi_A\}}(\proj{\psi}) =\min_{\Pi_A} \frac{1}{2} \|\proj{\psi} - \Pi_A[\proj{\psi}]\|_1
\]
is a \emph{bona fide} entanglement monotone for the bipartite pure state $\ket{\psi}=\sum_i \sqrt{p_i}\ket{i}\ket{i}$, here expressed in its Schmidt decomposition. The minimum disturbance is obtained by measuring in the local Schmidt basis and is equal to the positive $c$ such that
\beq
\label{eq:calcentdisturbance}
\sum_i \frac{p_i}{c+p_i} =1.
\eeq
The same holds for the two-sided trace-norm disturbance
\[
Q_{D_1,\{\Pi_A\otimes\Pi_B\}}(\proj{\psi}) = \min_{\Pi_A\otimes\Pi_B} \frac{1}{2} \|\proj{\psi} - (\Pi_A\otimes\Pi_B)[\proj{\psi}]\|_1,
\]
so that we have $Q_{D_1,\{\Pi_A\otimes\Pi_B\}}(\proj{\psi}) = Q_{D_1,\{\Pi_A\}}(\proj{\psi})$ for all $\ket{\psi}_{AB}$.
\end{theorem}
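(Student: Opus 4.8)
The plan is to reduce all three assertions to two facts already in hand: the single-basis disturbance formula of Lemma~\ref{lem:implicitc}, and the Schur-concavity of the function $c(\cdot)$ of Lemma~\ref{lem:schurconcavity}. For the entanglement-monotone assertion I would simply invoke Theorem~\ref{thm:maxquantum}: the trace distance $D_1$ is unitarily invariant, contractive under CPTP maps, and satisfies the ``flags'' identity by the block-diagonal trace-norm additivity~\eqref{eq:directsumtracedist}; and the family $\{\Pi_A\}$ of complete projective measurements on $A$ is evidently closed under conjugation by unitaries. Hence $Q_{D_1,\{\Pi_A\}}$ restricted to pure bipartite states is an entanglement monotone (its vanishing on product states and its local-unitary invariance being immediate from the proof of Theorem~\ref{thm:maxquantum}, where the quantity is shown to depend only on the Schmidt vector). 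Once the two-sided quantity is shown below to coincide with the one-sided one, it inherits the same property.

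For the explicit one-sided formula, Lemma~\ref{lem:implicitc} says that measuring $\ket{\psi}$ in an orthonormal basis $\{\ket{a_j}\}$ of $A$ disturbs it by exactly $c(\{r_j\})$, where $r_j=\bra{a_j}\rho_A\ket{a_j}$ are the diagonal entries of the reduced state $\rho_A$ in that basis. By Schur's theorem $\{r_j\}\prec\{p_i\}$, and since $c$ is Schur-concave this gives $c(\{r_j\})\ge c(\{p_i\})$, with equality precisely when the measurement basis is the Schmidt basis (there $\rho_A$ is already diagonal). Minimizing over $\Pi_A$ therefore yields $Q_{D_1,\{\Pi_A\}}(\proj{\psi})=c(\{p_i\})$, i.e.\ Eq.~\eqref{eq:calcentdisturbance}; one notes in passing that the minimum is attained because $c$ composed with the continuous map (basis $\mapsto$ diagonal of $\rho_A$) is continuous on a compact domain.

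The two-sided case is the part I expect to require the most care. For product bases $\{\ket{a_j}\}$ and $\{\ket{b_k}\}$ one has $(\Pi_A\otimes\Pi_B)[\proj{\psi}]=\sum_{jk}s_{jk}\,\proj{a_j}\otimes\proj{b_k}$ with $s_{jk}=|\braket{a_jb_k}{\psi}|^2$ the joint outcome probabilities; this is nothing but the dephasing of the \emph{pure} state $\ket{\psi}$ in the orthonormal basis $\{\ket{a_j}\ket{b_k}\}$, so the eigenvalue computation inside the proof of Lemma~\ref{lem:implicitc}---which only used that one dephases a pure state in an orthonormal family, not that the family was local---goes through verbatim with the single index $(j,k)$ in place of $i$ and gives $\tfrac12\|\proj{\psi}-(\Pi_A\otimes\Pi_B)[\proj{\psi}]\|_1=c(\{s_{jk}\})$. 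It then remains to minimize $c(\{s_{jk}\})$ over product bases. Writing $r_j:=\sum_k s_{jk}=\bra{a_j}\rho_A\ket{a_j}$, the vector $\{s_{jk}\}$ is obtained from $\{r_j\}$ by splitting each entry into nonnegative parts, hence $\{s_{jk}\}\prec\{r_j\}$; combined with $\{r_j\}\prec\{p_i\}$ (Schur, as above) this gives $\{s_{jk}\}\prec\{p_i\}$, so Schur-concavity of $c$ yields $c(\{s_{jk}\})\ge c(\{p_i\})$ for every product basis. On the other hand, taking both Schmidt bases gives $s_{jk}=p_j\delta_{jk}$ and hence $c(\{s_{jk}\})=c(\{p_i\})$ (the extra null entries do not affect~\eqref{eq:impliciteq}). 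Therefore $Q_{D_1,\{\Pi_A\otimes\Pi_B\}}(\proj{\psi})=c(\{p_i\})=Q_{D_1,\{\Pi_A\}}(\proj{\psi})$.

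The one genuinely delicate point is thus contained in the last step: confirming that the argument of Lemma~\ref{lem:implicitc} is insensitive to whether the dephasing basis factorizes across $A$ and $B$, and then chaining the two majorizations $\{s_{jk}\}\prec\{r_j\}\prec\{p_i\}$ to collapse the two-sided optimum onto the one-sided one. Everything else is a direct combination of Theorem~\ref{thm:maxquantum}, Lemmas~\ref{lem:implicitc} and~\ref{lem:schurconcavity}, and Schur's majorization inequality.
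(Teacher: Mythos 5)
Your proposal is correct. The one-sided part (Lemma~\ref{lem:implicitc} combined with the Schur--Horn majorization of the diagonal of $\rho_A$ by its eigenvalues and the Schur-concavity of $c$ from Lemma~\ref{lem:schurconcavity}) is exactly the paper's argument, and your appeal to Theorem~\ref{thm:maxquantum} for the monotonicity claim is legitimate and is in fact endorsed in the paper's proof; note, though, that the paper \emph{also} extracts deterministic LOCC monotonicity directly from the Schur-concavity of $c(\{p_i\})$ in the Schmidt vector via Nielsen's criterion, whereas you get the (stronger, average) form wholesale from Theorem~\ref{thm:maxquantum} after checking its hypotheses for $D_1$ and $\{\Pi_A\}$, which you do correctly (the ``flags'' condition is Eq.~\eqref{eq:directsumtracedist}). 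Where you genuinely diverge is the two-sided equality. The paper never computes the two-sided disturbance exactly for a general product basis: it proves $Q_{D_1,\{\Pi_A\otimes\Pi_B\}}\geq Q_{D_1,\{\Pi_A\}}$ by a variational lower bound, inserting the optimal one-sided witness $\ket{\phi}=\sum_i\sqrt{q_i}\ket{i}\ket{i}$ into $\bra{\phi}\left(\proj{\psi}-(\Pi_A\otimes\Pi_B)[\proj{\psi}]\right)\ket{\phi}$ and using $\Tr(\Pi_B[\proj{i}]\Pi_B[\proj{i}])\leq 1$, then saturates with the double Schmidt basis. You instead observe that $\Pi_A\otimes\Pi_B$ is complete dephasing of the pure state in the product basis, so the eigenvalue computation of Lemma~\ref{lem:implicitc} applies verbatim (the paper states this extension separately, as Theorem~\ref{thm:implicitdist}) and gives the exact value $c(\{s_{jk}\})$; the majorization chain $\{s_{jk}\}\prec\{r_j\}\prec\{p_i\}$ and Schur-concavity then collapse the two-sided optimum onto Eq.~\eqref{eq:calcentdisturbance}. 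This route is sound and arguably more informative, since it yields the two-sided disturbance for \emph{every} product basis rather than only the inequality; the only bookkeeping it requires is the one you mention, namely that $c$ is invariant under zero-padding and permutations, so that Lemma~\ref{lem:schurconcavity} (stated for equal-length vectors via bistochastic matrices) can be chained across vectors of different lengths. One small quibble: your parenthetical ``equality precisely when the measurement basis is the Schmidt basis'' is too strong when $\rho_A$ is degenerate (for a maximally entangled state every local basis is optimal); only attainment at the Schmidt basis is needed, and that is all the theorem asserts.
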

\begin{proof}
From Lemma \ref{lem:implicitc} we have that $\frac{1}{2} \|\proj{\psi} - \Pi_A[\proj{\psi}]\|_1$ is equal to $c(\{q_i\})$ where $q_i$ is the probability of outcome $i$ in the local projective measurement. Let the latter take place in the local Schmidt basis of $\ket{\psi}=\sum_i \sqrt{p_i} \ket{i}\ket{i}$, so that $q_i=p_i$. Let $\{\ket{u_j}\}$ be any other orthonormal basis. The probability of the outcome $j$ in such an alternative basis is
\[
\begin{split}
p'_j	&=\bra{u_j} \rho_A \ket{u_j}\\
	&=\bra{u_j}\left( \sum_i p_i \proj{i} \right) \ket{u_j}\\
	&=\sum_i p_i |\braket{u_j}{i}|^2.
\end{split}
\]
Since the coefficients $B_{ij}= |\braket{u_j}{i}|^2$ form the entries of a bistochastic matrix, the Schur concavity of $c(\{q_i\})$ (Lemma~\ref{lem:schurconcavity}) lets us conclude that the measurement in the Schmidt basis is optimal for the sake of disturbance.

Note that, similarly, the Schur concavity of $c(\{q_i\})$ in $\{q_i\}$ ensures that $Q_{D_1,\{\Pi_A\}}(\proj{\psi}) = c(\{p_i\})$, for $\{\sqrt{p_i}\}$ the Schmidt coefficients of $\ket{\psi}$, is a \emph{bona fide} entanglement measure on pure states~\cite{nielsen1999conditions,vidal2000entanglement} for deterministic LOCC transformations.  Recall that  Theorem~\ref{thm:maxquantum} already shows that  a wide class of disturbance measures, including $Q_{D_1,\{\Pi_A\}}$, are entanglement monotones on pure states,  i.e., they are non-increasing \emph{on average} under (non-deterministic) LOCC. Hence it provides a proof for a   stronger form of monotinicity.

Finally, to see that $Q_{D_1,\{\Pi_A\otimes\Pi_B\}}(\proj{\psi}) = Q_{D_1,\{\Pi_A\}}(\proj{\psi})$, realize that it again holds
\[
\frac{1}{2} \|\proj{\psi} - (\Pi_A\otimes\Pi_B)[\proj{\psi}]\|_1 = \max_{\ket{\phi}} \bra{\phi} \left( \proj{\psi} - (\Pi_A\otimes\Pi_B)[\proj{\psi}] \right) \ket{\phi}.
\]
Consider $\Pi_A$ projecting in the local Schmidt basis and $\ket{\phi}=\sum_i \sqrt{q_i}\ket{i}\ket{i}$ optimal choices for the sake of achieving $\max_{\ket{\phi}} \bra{\phi} \left( \proj{\psi} - \Pi_A[\proj{\psi}] \right) \ket{\phi}=Q_{D_1,\{\Pi_A\}}(\proj{\psi})=|\sum_i\sqrt{p_iq_i}|^2 - \sum_ip_iq_i$ (see the proof Lemma~\ref{lem:implicitc}), in the case of the one-sided measurement. Then, coming back to two-sided projective measurements,
\begin{multline*}
\max_{\ket{\phi}} \bra{\phi} \left( \proj{\psi} - (\Pi_A\otimes\Pi_B)[\proj{\psi}] \right) \ket{\phi}\\
\begin{aligned}
&\geq \bra{\phi} \left( \proj{\psi} - (\Pi_A\otimes\Pi_B)[\proj{\psi}] \right) \ket{\phi}\\
&\geq |\sum_i\sqrt{p_iq_i}|^2 - \sum_i p_i q_i \Tr(\Pi_B[\proj{i}]\Pi_B[\proj{i}]).
\end{aligned}
\end{multline*}
Since $\Tr(\Pi_B[\proj{i}]\Pi_B[\proj{i}])\leq 1$ for all choices of $\Pi_B$, and since $\Pi_A$ was optimal for $Q_{D_1,\{\Pi_A\}}(\proj{\psi})$, we have proven $Q_{D_1,\{\Pi_A\otimes\Pi_B\}}(\proj{\psi}) \geq Q_{D_1,\{\Pi_A\}}(\proj{\psi})$. The last inequality can be saturated for $\Pi_B$ a projection in the local Schmidt basis of $B$.
\end{proof}

We will call $Q_{D_1,\{\Pi_A\}}$ the \emph{entanglement of disturbance} when considering it on pure states, and denote it $E^\textrm{disturbance}(\psi_{AB})$.
Since it is an entanglement monotone on average, it can be naturally extended to an entanglement measure on mixed states by a convex-roof construction~\footnote{This would not  necessarily be true if only deterministic monotonicity was proven.}:
\beq
E^\textrm{disturbance} (\rho_{AB}) := Q^\textrm{cr}_{D_1,\{\Pi_A\}}(\rho_{AB})=\min_{\cE(\rho)} \sum_ip_i E^\textrm{disturbance}(\proj{\psi^{(i)}}).
\eeq

It is instructive to consider some simple cases to get a flavor of the new measure of entanglement, in particular to see how the formula \eqref{eq:calcentdisturbance} plays out. Obviously, in the case of a factorized state $\ket{\alpha}\ket{\beta}$ we have only one non-zero $p_i$, which is equal to 1. So condition \eqref{eq:calcentdisturbance} becomes $1/(c+1) =1$, which is satisfied by $c=0$, as expected. In the case of a maximally entangled state of two qudits, one has $p_i=1/d$ for $i=1,\dots,d$. So, \eqref{eq:calcentdisturbance} becomes $d\frac{1/d}{c+1/d}=1$, which is solved by $c=1-1/d$: this is the maximal value of minimal disturbance due to local projective measurements on pure states of two qudits. For a pure state of two qubits, the probability distribution reads $\{p_i\}=\{p,1-p\}$, and \eqref{eq:calcentdisturbance} becomes
\[
\frac{p}{c+p} + \frac{1-p}{c+1-p}=1
\]
which is satisfied by $c=\sqrt{p(1-p)}$. The latter is the same as the negativity of entanglement~\cite{vidal2001computable} (and, up to a constant factor, concurrence~\cite{wootters1998entanglement}), as to be expected from~\cite{nakano2013negativity,ciccarello2014}.

We conclude this section by providing some explicit upper and lower bounds for $E=E^\textrm{disturbance}$. We recall that the main result of Theorem~\ref{thm:cmonotone} can be restated as the fact that the entanglement of disturbance of $\ket{\psi_{AB}}$ is the positive number $E$ such that
\[
\sum_i \frac{p_i}{E+p_i} =1,
\]
where the $p_i$'s are the Schmidt coefficients of $\ket{\psi_{AB}}$. This is an analytic expression for $E$, as $E$ can simply be considered the inverse of the function $y=f(x)=\sum_i \frac{p_i}{x+p_i}$---with the Schmidt coefficients considered as parameters---evaluated at $y=1$. Nonetheless we provide here some bounds in terms of more standard functions.

In order to find an upper bound to $E$ we can consider the following steps:
\[
\begin{split}
1&=\sum_i \frac{p_i}{E+p_i}\\
&=\frac{p_1}{E+p_1}+(R-1)\sum_{i=2}^R\frac{1}{R-1}\left(\frac{p_i}{E+p_i}\right)\\
&\leq \frac{p_1}{E+p_1}+(R-1)\frac{\sum_{i=2}^R\frac{1}{R-1}p_i}{E+\sum_{i=1}^R\frac{1}{R-1}p_i}\\
&=\frac{p_1}{E+p_1} + \frac{1-p_1}{E+\frac{1-p_1}{R-1}},
\end{split}
\]
where the inequality is due to the concavity of $x/(1+x)$, $p_1$ is the largest probability, and $R$ is the rank (the number of non-vanishing $p_i$'s). From this we find the bound
\beq
\label{eq:upperbound2}
\begin{split}
E&\leq \frac{-2+2 p_1+R-p_1 R+\sqrt{4-4 p_1-4 R+4 p_1^2 R+R^2+2 p_1 R^2-3 p_1^2 R^2}}{2 (-1+R)}\\
&\leq \frac{1}{2} \left(1-p+\sqrt{-3 p^2+1+2 p}\right),
\end{split}
\eeq
where the second bound is obtained from the first in the limit $R\rightarrow \infty$.
On the other hand, to find a lower bound, we can consider
\beq
\label{eq:lowerboundE}
\begin{split}
1&=\sum_i \frac{p_i}{E+p_i}\\
&=\frac{p_1}{E+p_1}+\sum_{i=2}^R \frac{p_i}{E+p_i}\\
&\geq\frac{p_1}{E+p_1}+\sum_{i=2}^R \frac{p_i}{E+p_2}\\
&=\frac{p_1}{E+p_1}+\frac{1-p_1}{E+p_2}.
\end{split} 
\eeq
Here $p_2$ is the second largest probability, hence the inequality. From this, we can find
\[
E\geq \frac{1}{2} \left(1-p_1-p_2+\sqrt{-3 p_1^2+(-1+p_2)^2+2 p_1 (1+p_2)}\right)\geq 1-p_1,
\]
where the rightmost bound is obtained by setting $p_2\equiv p_1$, i.e., loosening the  inequality in \eqref{eq:lowerboundE}.

Both the upper and the lower bound can be checked to be good, in that they converge to the actual value of $E^\textrm{disturbance}$ in both the limit of an unentangled state and a maximally entangled one.

\begin{figure}
\begin{center}
\includegraphics[scale=.6]{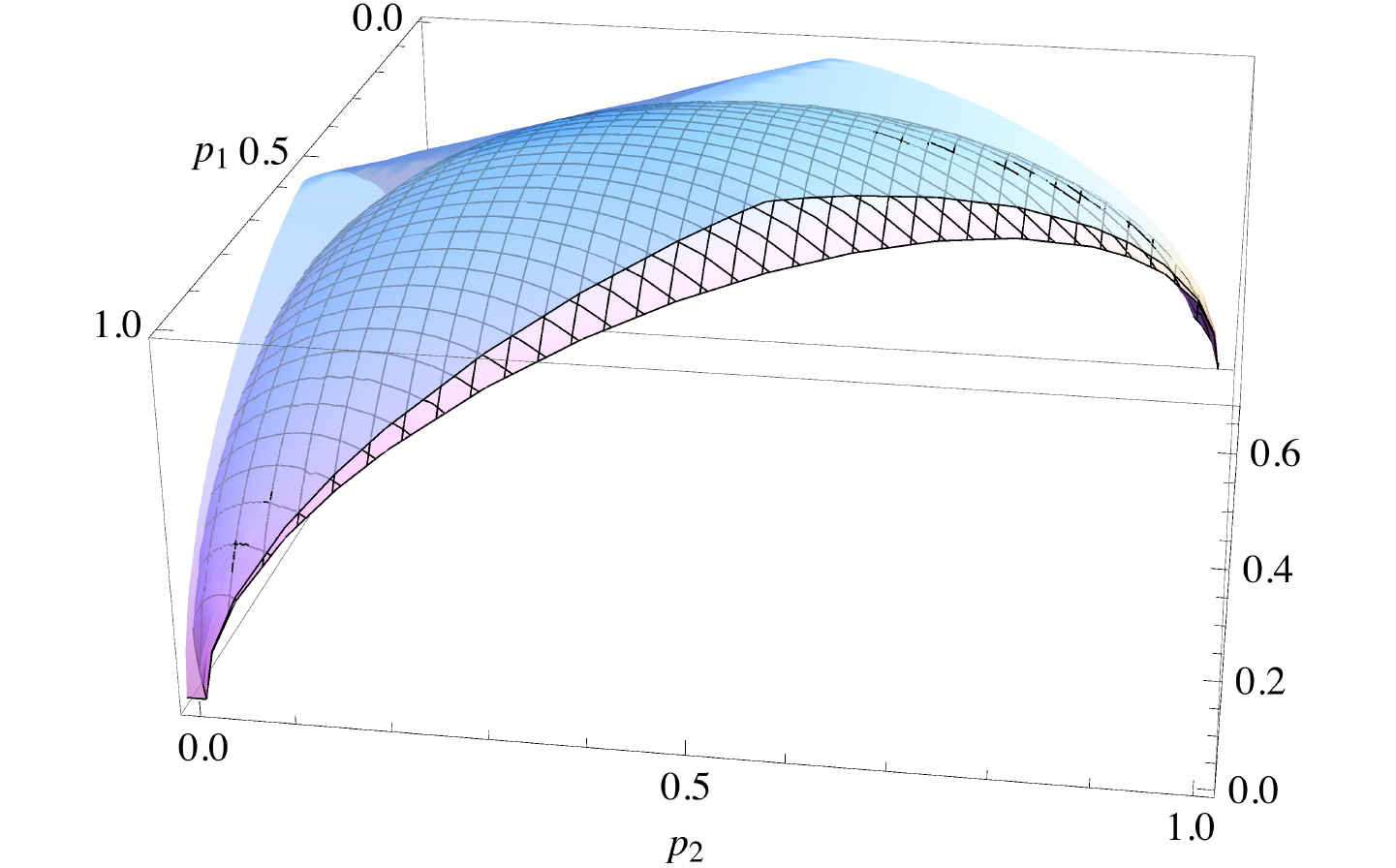}
\caption{Comparison of $E^\textrm{disturbance}$ (lower graph) and the upper bound of Eq.~\eqref{eq:upperbound2} (upper, semitransparent graph) in the case $d=3$ ($d$ is the local dimension). The upper bound is closer to the exact value for highly entangled states corresponding to $p_1,p_2\approx 1/3$ (for which also $p_3=1-p_1-p_2\approx 1/3$).} 
\label{fig:bound}
\end{center}
\end{figure}

\subsection{Bounds on disturbance}

We would like to remark on the difference between calculating (bounds for) the maximal disturbance on \emph{one} distributed state, and on an \emph{ensemble} of distributed states. This is because the measurement in the first case can be tailored to the particular state. More concretely, while
\[
Q_{D_1,\{\Pi_A\otimes\Pi_B\}}(\psi)\leq \max_{\ket{\phi_{AB}}} \min_{\{\Pi_A\otimes\Pi_B\}}\frac{1}{2}\| \proj{\phi}_{AB} - (\Pi_A\otimes\Pi_B) [\proj{\phi}_{AB}] \|_1,
\]
we have (see Eq.~\eqref{eq:generalboundensemble})
\[
Q_{D_1,\{\Pi_A\otimes\Pi_B\}}(\mathcal{E})\leq \min_{\{\Pi_A\otimes\Pi_B\}} \max_{\ket{\phi_{AB}}}\frac{1}{2}\| \proj{\phi}_{AB} - (\Pi_A\otimes\Pi_B) [\proj{\phi}_{AB}] \|_1.
\]
In particular, Theorem~\ref{thm:cmonotone} implies
\beq
\label{eq:boundsingle}
Q_{D_1,\{\Pi_A\otimes\Pi_B\}}(\psi)\leq 1-1/d_A,
\eeq
to be compared with the ensemble bound \eqref{eq:boundensemble} of Corollary~\ref{cor:maxtwowaydist} below, here reported for the convenience of the reader:
\[
Q_{D_1,\{\Pi_A\otimes\Pi_B\}}(\mathcal{E})\leq 1-1/(d_Ad_B).
\]
Nonetheless, as anticipated, in our quest for bounds for the quantumness of ensembles we will be able to take advantage of the results and techniques developed in Section \ref{sec:entmeasure}. 
 
To begin, we remark how in the proof of Lemma~\ref{lem:implicitc} we used a decomposition $\ket{\psi}=\sum_i \ket{i}\ket{w_i}$; this was to analyze projective measurements of the first system in the basis $\{\ket{i}\}$, with the $\ket{w_i}$'s neither orthonormal nor normalized. It is easy to convince oneself that all the calculations we did remain valid in the case of a single system, as long as we interpret the $\ket{w_i}$'s as complex numbers. Therefore, we find

\begin{theorem}
\label{thm:implicitdist}
In the case of a single-system pure state $\psi$, the disturbance caused by a complete projective measurement $\Pi$ in the orthonormal basis $\{\ket{i}\}$, as measured by the trace distance is given by the positive $c$ such that
\beq
\label{eq:implicitedist}
\sum_i \frac{p_i}{c+p_i} =1.
\eeq
for $p_i=|\braket{i}{\psi}|^2$. 
\end{theorem}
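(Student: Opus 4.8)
The plan is to deduce this statement as an immediate specialization of Lemma~\ref{lem:implicitc}: the single-system scenario is nothing but the bipartite one of that lemma with a trivial (one-dimensional) second system. Concretely, I would write $\ket\psi=\sum_i\psi_i\ket i$ with $\psi_i=\braket i\psi\in\bbC$, fix a unit vector $\ket 0$ spanning a one-dimensional $B$, and set $\ket{w_i}:=\psi_i\ket 0$, so that $\ket\psi=\sum_i\ket i_A\ket{w_i}_B$ with $\braket{w_i}{w_i}=|\psi_i|^2=p_i$. Since $B$ is one-dimensional, $\rho_A=\Tr_B\proj\psi=\proj\psi$, hence $\bra i\rho_A\ket i=|\braket i\psi|^2=p_i$, matching the statement; and $\Pi_A[\proj\psi]=\sum_i\proj i\otimes\proj{w_i}=\sum_ip_i\proj i=\Pi[\proj\psi]$, so the one-sided complete projective measurement $\Pi_A$ reduces to $\Pi$. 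Lemma~\ref{lem:implicitc} then yields directly $\frac{1}{2}\|\proj\psi-\Pi[\proj\psi]\|_1=c$, with $c\ge 0$ the unique solution of $\sum_ip_i/(c+p_i)=1$.

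Alternatively, and more in keeping with the way the preceding lemmas are written, one can simply re-run the argument of Lemma~\ref{lem:implicitc} verbatim, reading each $\ket{w_i}$ and each $\ket{z_i}$ as a complex scalar. The steps are unchanged: $\proj\psi-\Pi[\proj\psi]$ is Hermitian and traceless, and the positive part of $\proj v-A$ has rank at most one for any vector $\ket v$ and any $A\ge 0$, so $\frac{1}{2}\|\proj\psi-\Pi[\proj\psi]\|_1=\|\proj\psi-\Pi[\proj\psi]\|_\infty=\max_{\ket\phi}\bra\phi(\proj\psi-\Pi[\proj\psi])\ket\phi$; writing $\ket\phi=\sum_i\phi_i\ket i$ and optimizing the phases reduces this to $\max_{\ket q}\bra q(\proj p-\sum_ip_i\proj i)\ket q$ with $\ket p=\sum_i\sqrt{p_i}\ket i$; and the largest eigenvalue $c$ of $\proj p-\sum_ip_i\proj i$ is pinned down exactly as before by the eigenvector relation $\sqrt{q_i}=\braket pq\,\sqrt{p_i}/(c+p_i)$ together with the self-consistency condition $\sum_i\sqrt{p_iq_i}=\braket pq$, which jointly force $\sum_ip_i/(c+p_i)=1$; monotonicity of this function in $c\ge0$ gives uniqueness.

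The only point needing any verification in the second route is that the one inequality used in Lemma~\ref{lem:implicitc}, namely $\Re(\braket{z_i}{w_i}\braket{w_j}{z_j})\le\sqrt{q_ip_iq_jp_j}$ with equality at $\ket{z_i}=(\sqrt{q_i}/\sqrt{p_i})\ket{w_i}$, survives the passage to scalars — which it trivially does, being just the statement $\Re(ab)\le|a||b|$ for $a,b\in\bbC$. So I do not expect any genuine obstacle here; indeed the embedding of the first paragraph makes the theorem a corollary of Lemma~\ref{lem:implicitc} and sidesteps even this trivial check.
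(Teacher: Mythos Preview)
Your proposal is correct and matches the paper's own argument: the paper simply remarks that the proof of Lemma~\ref{lem:implicitc} goes through verbatim once the $\ket{w_i}$'s (and $\ket{z_i}$'s) are read as complex scalars, which is precisely your second route. Your first route, embedding into a bipartite system with a one-dimensional $B$, is just a clean formal repackaging of the same observation and is equally valid.
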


We are thus able to find a bound on the quantumness of ensembles, based on the maximum disturbance caused by a fixed projective measurement.

\begin{corollary}
\label{cor:maxdis}
 The maximum disturbance of one state under a von Neumann measurement in dimension $d$, minimized over all measurements and maximized over all states, is given by:
 \beq
 \min_{\{\ket i\}_{i=1}^d}\ \max_{\rho\in\cP_1[\cH_d]}\frac{1}{2}\left\|\rho-\sum_{i=1}^d\left(\proj i\rho\proj i\right)\right\|_1= \ \max_{\rho\in\cP_1[\cH_d]}\frac{1}{2}\left\|\rho-\sum_{i=1}^d\left(\proj i\rho\proj i\right)\right\|_1=1-\fr{1}{d},
 \eeq
 where $\{\ket i\}_{i=1}^d$ denotes a general orthonormal basis spanning the space. Thus,
\beq
\label{eq:boundensemblesingle}
Q_{D_1,\{\Pi\}}(\mathcal{E})\leq 1-\frac{1}{d}.
\eeq
 
\end{corollary}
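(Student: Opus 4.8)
The plan is to reduce the nested optimization to the single-state formula of Theorem~\ref{thm:implicitdist} and then exploit the Schur-concavity proven in Lemma~\ref{lem:schurconcavity}. First I would trivialize the outer minimization over bases. For a fixed complete projective measurement $\bar\Pi$ in some orthonormal basis, and any other basis related to it by a unitary $U$, the substitution $\ket{\psi}\mapsto U\ket{\psi}$ together with the unitary invariance of the trace norm shows that $\max_{\ket{\psi}}\frac12\|\proj{\psi}-\Pi[\proj{\psi}]\|_1$ takes the same value for every basis; hence the minimum over bases is attained at every basis and the first equality in the statement holds. I would also remark that, since $\rho\mapsto\frac12\|\rho-\Pi[\rho]\|_1$ is a norm composed with an affine map and therefore convex, its maximum over states is attained at an extreme point, so it suffices to consider $\rho=\proj{\psi}$.

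Next, fixing a basis $\{\ket i\}$, Theorem~\ref{thm:implicitdist} identifies $\frac12\|\proj{\psi}-\Pi[\proj{\psi}]\|_1$ with $c(\{p_i\})$, where $p_i=|\braket i\psi|^2$. As $\ket{\psi}$ ranges over all unit vectors, $\{p_i\}$ ranges over all probability distributions on $d$ outcomes, so the remaining problem is $\max_{\{p_i\}}c(\{p_i\})$. By Lemma~\ref{lem:schurconcavity}, $c$ is Schur-concave; since the uniform distribution $p_i=1/d$ is majorized by every probability distribution on $d$ points, it maximizes $c$. Solving $\sum_{i=1}^d \frac{1/d}{c+1/d}=1$ yields $c=1-1/d$ (with the maximizer realized, e.g., by $\ket{\psi}=d^{-1/2}\sum_i\ket i$), which gives the second equality.

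Finally, the ensemble bound \eqref{eq:boundensemblesingle} is immediate from the general inequality \eqref{eq:generalboundensemble}: since $D_1$ is jointly convex, $Q_{D_1,\{\Pi\}}(\mathcal E)\le \min_{\Pi}\max_{\ket{\psi}}\frac12\|\proj{\psi}-\Pi[\proj{\psi}]\|_1=1-1/d$. I do not anticipate a real obstacle here; the only points needing care are the unitary-invariance argument that makes the $\min$ over bases vacuous and the correct invocation of majorization, namely that the uniform distribution sits at the bottom of the majorization order and hence maximizes the Schur-concave function $c$.
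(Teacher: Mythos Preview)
Your proposal is correct and follows essentially the same approach as the paper's proof: reduce to pure states by convexity, invoke Theorem~\ref{thm:implicitdist} together with the Schur-concavity of Lemma~\ref{lem:schurconcavity} to identify the uniform distribution as the maximizer, and read off $c=1-1/d$. You are simply more explicit about the unitary-invariance step that trivializes the basis minimization and about citing \eqref{eq:generalboundensemble} for the ensemble bound, both of which the paper treats as understood from the surrounding discussion.
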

\begin{proof}
The maximum is attained by a pure state because of the convexity of the trace-norm. From Theorem~\ref{thm:implicitdist} and the Schur concavity of $c$ as a function of the probabilities $\{p_i\}$ (Lemma~\ref{lem:schurconcavity}), it is clear that the maximum is attained for the flat probability distribution $p_i=1/d$, which can be obtained by measuring the the state $(\sum_{i=1}^d \ket{i})/\sqrt{d}$ in the basis $\{\ket{i}\}$.
\end{proof}

Thus, as one may expect, the worst disturbance is obtained by considering a pure state and a projective measurement in a basis that is unbiased with respect to that state, so that $\|\proj{\psi}-\Pi[\proj{\Psi}]\|_1= \|\proj{\psi} - \openone/d\|_1=2(1-1/d)$.

In the bipartite case, the upper bound can be achieved even without entanglement, by local projective measurements acting on appropriately ``skewed'' local pure states. We thus have
\begin{corollary}
\label{cor:maxtwowaydist}
The maximum disturbance of one bipartite state under local complete von Neumann measurements in local dimensions $d_A$ and $d_B$ is given by:
 \beq
\min_{\{\ket{i}_A\}_{i=1}^{d_A},\{\ket{j}_B\}_{i=1}^{d_B}}  \max_{\rho_{AB}}\frac{1}{2}\left\|\rho-\sum_{i=1}^{d_A}\sum_{j=1}^{d_B}\left(\proj{i}_A\otimes\proj{j}_B)\rho_{AB}(\proj{i}_A\otimes\proj{j}_B\right)\right\|_1=1-\fr{1}{d_Ad_B}.
\eeq
Thus,
\beq
\label{eq:boundensemble}
Q_{D_1,\{\Pi_A\otimes\Pi_B\}}(\mathcal{E})\leq 1-\frac{1}{d_A d_B}.
\eeq
\end{corollary}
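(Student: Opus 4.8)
The plan is to reduce the bipartite statement to the single‑system result of Corollary~\ref{cor:maxdis}. The key observation is that a bilocal complete projective measurement $\Pi_A\otimes\Pi_B$ in fixed product bases $\{\ket{i}_A\}\otimes\{\ket{j}_B\}$ is nothing but a \emph{single} complete von Neumann measurement on the composite system $AB$ of total dimension $d_Ad_B$, namely the one in the product orthonormal basis $\{\ket{i}_A\ket{j}_B\}$. Consequently, for any state $\rho_{AB}$,
\[
\tfrac{1}{2}\left\|\rho_{AB}-(\Pi_A\otimes\Pi_B)[\rho_{AB}]\right\|_1
\]
is a particular instance of the single‑system disturbance appearing in Corollary~\ref{cor:maxdis}, with $d=d_Ad_B$. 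This immediately gives the upper bound
\[
\min_{\{\ket{i}_A\},\{\ket{j}_B\}}\ \max_{\rho_{AB}}\ \tfrac{1}{2}\left\|\rho_{AB}-(\Pi_A\otimes\Pi_B)[\rho_{AB}]\right\|_1\ \le\ \min_{\{\ket{k}\}_{k=1}^{d_Ad_B}}\ \max_{\rho}\ \tfrac{1}{2}\left\|\rho-\sum_k \proj{k}\rho\proj{k}\right\|_1\ =\ 1-\tfrac{1}{d_Ad_B},
\]
where the last equality is exactly Corollary~\ref{cor:maxdis}. Only the reverse inequality — that one cannot do better by restricting the admissible bases to product bases — requires an argument.

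For the matching lower bound, I would exhibit, for \emph{any} fixed product bases $\{\ket{i}_A\}$, $\{\ket{j}_B\}$, a pure state that saturates it. Take the fully unbiased product state $\ket{\Psi}=\ket{a}_A\ket{b}_B$ with $\ket{a}=(1/\sqrt{d_A})\sum_i\ket{i}_A$ and $\ket{b}=(1/\sqrt{d_B})\sum_j\ket{j}_B$; this is the ``skewed'' local pure state alluded to just before the corollary. Then $\ket{\Psi}$ is an eigenvector of no element of $\Pi_A\otimes\Pi_B$, and in the product basis $\{\ket{i}_A\ket{j}_B\}$ its outcome distribution is flat, $p_{ij}=1/(d_Ad_B)$. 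Now invoke Theorem~\ref{thm:implicitdist} (applicable since $\ket{\Psi}$ is a pure state of the $(d_Ad_B)$‑dimensional system $AB$ and $\Pi_A\otimes\Pi_B$ is a complete von Neumann measurement on it): the disturbance equals the positive $c$ solving $\sum_{ij}\tfrac{p_{ij}}{c+p_{ij}}=1$, which for the flat distribution $p_{ij}=1/(d_Ad_B)$ gives $c=1-1/(d_Ad_B)$. Alternatively, and more directly, $(\Pi_A\otimes\Pi_B)[\proj{\Psi}]=\openone/(d_Ad_B)$, so $\tfrac{1}{2}\|\proj{\Psi}-\openone/(d_Ad_B)\|_1=1-1/(d_Ad_B)$ by the standard trace‑norm computation. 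Since this holds for every choice of product bases, the outer minimization cannot bring the value below $1-1/(d_Ad_B)$, and combined with the upper bound above we get the claimed equality. The bound \eqref{eq:boundensemble} then follows from \eqref{eq:generalboundensemble}: by joint convexity of the trace distance, $Q_{D_1,\{\Pi_A\otimes\Pi_B\}}(\cE)\le\min_{\Pi_A\otimes\Pi_B}\max_{\ket{\psi}}\tfrac{1}{2}\|\proj{\psi}-(\Pi_A\otimes\Pi_B)[\proj{\psi}]\|_1=1-1/(d_Ad_B)$.

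The only subtle point — and hence the ``main obstacle'' — is making sure the reduction to the single‑system case is tight: Corollary~\ref{cor:maxdis} minimizes over \emph{all} orthonormal bases of $\cH_{d_Ad_B}$, whereas here we may only use \emph{product} bases, so a priori the product‑basis minimum could be strictly larger. The lower‑bound construction above resolves this precisely because the saturating state $\ket{a}_A\ket{b}_B$ is itself a product state, so the worst‑case state is attainable within the restricted arena and the restriction costs nothing; no genuinely new estimate beyond Corollary~\ref{cor:maxdis} and the flat‑distribution evaluation is needed. One should also note, as the paragraph preceding the corollary emphasizes, that entanglement plays no role here: the extremal state is a product state, in contrast to the single‑copy bound \eqref{eq:boundsingle} for a fixed pure state, where maximal entanglement is required.
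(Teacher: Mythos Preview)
Your approach is exactly the one the paper intends (the paper gives no formal proof, only the sentence preceding the corollary about ``skewed'' local pure states), and your lower-bound construction via the unbiased product state $\ket{a}_A\ket{b}_B$ is correct and is precisely what the paper has in mind. However, your upper-bound step contains a direction error that you should fix.

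You write
\[
\min_{\text{product bases}}\ \max_{\rho_{AB}}\ \tfrac{1}{2}\bigl\|\rho_{AB}-(\Pi_A\otimes\Pi_B)[\rho_{AB}]\bigr\|_1
\ \le\
\min_{\text{all bases}}\ \max_{\rho}\ \tfrac{1}{2}\bigl\|\rho-\Pi[\rho]\bigr\|_1,
\]
justifying it by ``a particular instance of the single-system disturbance''. But product bases form a \emph{subset} of all bases, so minimizing over a subset gives a value \emph{at least} as large as the full minimum: the inequality as written goes the wrong way. What actually delivers the upper bound is the stronger content of Corollary~\ref{cor:maxdis}, namely the \emph{middle} equality there: for \emph{every} fixed complete von Neumann measurement in dimension $d$ one has $\max_\rho \tfrac{1}{2}\|\rho-\Pi[\rho]\|_1=1-1/d$. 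Applying this with $d=d_Ad_B$ to any fixed product basis (which is a legitimate global basis) gives the inner maximum equal to $1-1/(d_Ad_B)$ for every product basis, hence the outer minimum is exactly $1-1/(d_Ad_B)$. Once you phrase the upper bound this way, your lower-bound construction is in fact redundant (though it is still a nice explicit witness and matches the paper's remark that no entanglement is needed). The ensemble bound via \eqref{eq:generalboundensemble} is fine.
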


These results can be generalized to the multipartite case, and
one can cast the bound on disturbance in the following general way.
\begin{theorem}
Consider a composite system $A_1A_2\ldots A_n$, with local dimensions $d_1,d_2,\ldots,d_n$. The maximum disturbance under complete projective measurements on $A_{k_1}A_{k_2}\ldots A_{k_m}$, with $\{k_1,k_2,\ldots,k_m\}\subseteq\{1,2,\ldots,n\}$ is given by:
 \beq
 \label{eq:boundmaxdisturbancemultipartite}
  \min_{\Pi_{A_{k_1}A_{k_2}\ldots A_{k_m}}} \max_{\rho_{A_1A_2\ldots A_n}}\frac{1}{2}
  \left\| \rho_{A_1A_2 \ldots A_n} - \Pi_{A_{k_1} A_{k_2}\ldots A_{k_m}} \left[\rho_{A_1A_2 \ldots A_n}\right] \right\|_1=
  1-\frac{1}{d_{A_{k_1}}d_{A_{k_2}}\cdots d_{A_{k_m}}},
 \eeq
independently of whether the minimization is over arbitrary complete projective measurements on $A_{k_1} A_{k_2}\ldots A_{k_m}$ or over local---with respect to an arbitrary grouping of $A_{k_1}A_{k_2}\ldots A_{k_m}$---ones. 
Thus,
\beq
\label{eq:boundensemblemulti}
Q_{D_1,\{\Pi_{A_{k_1} A_{k_2}\ldots A_{k_m}}\}} (\mathcal{E})\leq  1-\frac{1}{d_{A_{k_1}}d_{A_{k_2}}\cdots d_{A_{k_m}}}.
\eeq
\end{theorem}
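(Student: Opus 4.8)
The plan is to collapse the multipartite statement onto a single bipartite cut. Put $D:=d_{A_{k_1}}d_{A_{k_2}}\cdots d_{A_{k_m}}$, regard $\cH:=\cH_{A_{k_1}}\otimes\cdots\otimes\cH_{A_{k_m}}$ as one system of dimension $D$, and let $\cH'$ be the tensor product of the remaining factors, so $\rho_{A_1A_2\ldots A_n}$ is a state on $\cH\otimes\cH'$. Every complete projective measurement on $A_{k_1}\ldots A_{k_m}$ — arbitrary on $\cH$, or local with respect to an arbitrary grouping of those factors, or fully local — acts as $\Pi[\rho]=\sum_{i=1}^{D}(P_i\otimes\openone_{\cH'})\,\rho\,(P_i\otimes\openone_{\cH'})$ for some complete family $\{P_i\}_{i=1}^{D}$ of rank-one orthogonal projectors on $\cH$ (with a product basis of $\cH$, factorizing along the grouping, in the local cases). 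Hence it suffices to prove that \emph{every} such $\Pi$ obeys $\max_{\rho}\frac{1}{2}\|\rho-\Pi[\rho]\|_1=1-1/D$. Granting this, every admissible $\Pi$ attains exactly $1-1/D$, so the minimum over any nonempty subclass — global, grouping-local, or fully local — is again $1-1/D$, which gives the claimed equality irrespective of the grouping; and \eqref{eq:boundensemblemulti} follows from the general estimate~\eqref{eq:generalboundensemble} (the trace distance being jointly convex) together with convexity of $\rho\mapsto\|(\id-\Pi)[\rho]\|_1$, which moves the maximum over states onto pure states.

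For the easy direction, fix the measurement basis $\{\ket{i}\}_{i=1}^{D}$ of $\cH$ and take the product test state $\rho=\proj{\psi}$ with $\ket{\psi}=\big(\frac{1}{\sqrt D}\sum_{i=1}^{D}\ket{i}\big)\otimes\ket{\phi}$ and $\ket{\phi}\in\cH'$ arbitrary. Then $\Pi[\rho]=(\openone_{\cH}/D)\otimes\proj{\phi}$, and the nonzero spectrum of $\rho-\Pi[\rho]$ consists of $1-1/D$ once and $-1/D$ with multiplicity $D-1$, so $\frac{1}{2}\|\rho-\Pi[\rho]\|_1=1-1/D$; equivalently one may quote Corollary~\ref{cor:maxdis} for dimension $D$ and multiplicativity of the trace norm across the $\cH:\cH'$ tensor factorization. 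Thus $\max_{\rho}\frac{1}{2}\|\rho-\Pi[\rho]\|_1\ge 1-1/D$ for every admissible $\Pi$.

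The heart of the argument is the matching upper bound, i.e.\ that entanglement of $\rho$ across the $\cH:\cH'$ cut cannot raise the disturbance above $1-1/D$. I would derive this from the operator inequality $\Pi[\rho]\ge\frac{1}{D}\rho$, valid for all $\rho$ on $\cH\otimes\cH'$. By linearity of $\Pi$ it is enough to check it on pure $\rho=\proj{\psi}$: writing $\ket{\psi}=\sum_i\ket{i}\ket{w_i}$ with $\ket{w_i}\in\cH'$ one has $\Pi[\proj{\psi}]=\sum_i\proj{i}\otimes\proj{w_i}$, so for $\ket{\chi}=\sum_i\ket{i}\ket{v_i}$, Cauchy--Schwarz over the $D$ summands gives $\bra{\chi}\Pi[\proj{\psi}]\ket{\chi}=\sum_i|\braket{v_i}{w_i}|^2\ge\frac{1}{D}\big|\sum_i\braket{v_i}{w_i}\big|^2=\frac{1}{D}|\braket{\chi}{\psi}|^2$, that is $\Pi[\proj{\psi}]\ge\frac{1}{D}\proj{\psi}$. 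Since $\Tr(\Pi[\rho]-\frac{1}{D}\rho)=1-\frac{1}{D}$, we may write $\Pi[\rho]=\frac{1}{D}\rho+\frac{D-1}{D}\sigma$ with $\sigma:=\frac{D}{D-1}(\Pi[\rho]-\frac{1}{D}\rho)$ a density operator, whence $\rho-\Pi[\rho]=\frac{D-1}{D}(\rho-\sigma)$ and $\frac{1}{2}\|\rho-\Pi[\rho]\|_1=\frac{D-1}{D}\cdot\frac{1}{2}\|\rho-\sigma\|_1\le\frac{D-1}{D}=1-\frac{1}{D}$. Combined with the previous paragraph this gives $\max_{\rho}\frac{1}{2}\|\rho-\Pi[\rho]\|_1=1-1/D$ for every $\Pi$, completing the reduction.

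The step I expect to carry the weight is the inequality $\Pi[\rho]\ge\rho/D$ (equivalently the Cauchy--Schwarz bound on $\bra{\chi}\Pi[\proj{\psi}]\ket{\chi}$): it is precisely what makes the bound insensitive to correlations with the unmeasured subsystems, and from it both halves fall out cleanly. By contrast, the reduction to a single $\cH:\cH'$ cut, the choice of the skewed product state achieving the bound, and the bookkeeping that makes the value identical for global and for all grouping-local measurements are routine given the machinery of Section~\ref{sec:bounds}, in particular Lemma~\ref{lem:implicitc}, Lemma~\ref{lem:schurconcavity} and Corollary~\ref{cor:maxdis}.
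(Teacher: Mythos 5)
Your proof is correct, but the decisive step is different from the paper's. The paper obtains the multipartite bound as a corollary of the machinery of Section~\ref{sec:bounds}: convexity reduces the maximization to pure states, Lemma~\ref{lem:implicitc} (with the measured group viewed as the $A$ side and everything unmeasured as $B$) gives the disturbance exactly as the solution $c$ of $\sum_i p_i/(c+p_i)=1$ in the outcome probabilities, and the Schur concavity of $c(\{p_i\})$ (Lemma~\ref{lem:schurconcavity}) shows the maximum is attained at the flat distribution, with value $1-1/D$; the ``skewed'' product state then shows the flat distribution, and hence the maximum, is reachable even for grouping-local measurements. You instead prove the matching upper bound by the operator (pinching) inequality $\Pi[\rho]\ge\rho/D$ for a complete rank-one projective measurement on the $D$-dimensional measured factor, established via Cauchy--Schwarz on $\bra{\chi}\Pi[\proj{\psi}]\ket{\chi}$, and then write $\rho-\Pi[\rho]=\frac{D-1}{D}(\rho-\sigma)$ with $\sigma$ a state, which immediately gives $\frac12\|\rho-\Pi[\rho]\|_1\le 1-1/D$ for \emph{all} states, mixed included, with no appeal to the exact pure-state formula or to Schur concavity; your lower-bound test state coincides with the paper's skewed product state. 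Your route is more self-contained and arguably cleaner for this particular theorem (and the inequality $\Pi[\rho]-\rho/D\ge 0$ is pleasantly close in spirit to the reduction-map positivity discussed in the note added on~\cite{jivulescu2014positive}); the paper's route is heavier but yields strictly more, namely the exact value of the pure-state disturbance for an arbitrary Schmidt spectrum, which underlies the entanglement of disturbance, with the present theorem falling out as a special case. Both arguments correctly note that the value is the same for every admissible measurement, so the minimization over global versus grouping-local measurements is immaterial, and the ensemble bound \eqref{eq:boundensemblemulti} follows from \eqref{eq:generalboundensemble} in either treatment.
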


We notice that it is possible to tighten all the above bounds on disturbance if one takes also into account the probabilities of the various states in the ensemble. In particular, using a consideration similar to the one used to bound the relative entropy of quantumness of classical-quantum states in~\cite{gharibian2011characterizing}, one can derive the following.
\begin{theorem}
\label{thm:multiimprov}
Consider a composite system $A_1A_2\ldots A_n$, with local dimensions $d_1,d_2,\ldots,d_n$. Suppose $\cE$ is an ensemble comprising, with probability $q$, a state $\rho=\rho_{A_1A_2\ldots A_n}$ which is classical on the individual systems ${A_{k_1} A_{k_2}\ldots A_{k_m}}$; then
\[
Q_{D_1,\Pi_{A_{k_1} A_{k_2}\ldots A_{k_m}}} (\mathcal{E})\leq  (1-q)\left(1-\frac{1}{d_{A_{k_1}}d_{A_{k_2}}\cdots d_{A_{k_m}}}\right),
\]
independently of whether the minimization is over arbitrary complete projective measurements or over local ones. In particular, if there are $n$ states in the ensemble and they all are classical on the individual systems ${A_{k_1} A_{k_2}\ldots A_{k_m}}$, then
\[
Q_{D_1,\Pi_{A_{k_1} A_{k_2}\ldots A_{k_m}}} (\mathcal{E})\leq  \left(1-\frac{1}{n}\right)\left(1-\frac{1}{d_{A_{k_1}}d_{A_{k_2}}\cdots d_{A_{k_m}}}\right).
\]
\end{theorem}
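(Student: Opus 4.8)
The plan is to exploit that $Q_{D_1,\Pi_{A_{k_1}A_{k_2}\ldots A_{k_m}}}[\cE]$ is, by \eqref{eq:basicdefinition}, an infimum over admissible measurements, so one well-chosen measurement suffices to bound it. Write $C\assign A_{k_1}A_{k_2}\ldots A_{k_m}$, let $\bar C$ be its complement inside $A_1\ldots A_n$, put $d\assign d_{A_{k_1}}d_{A_{k_2}}\cdots d_{A_{k_m}}$, and let $(q,\rho)$ denote the distinguished element of $\cE$. I would use the measurement $\Pi^\star$ that does not disturb $\rho$: since $\rho$ is classical on each of $A_{k_1},\ldots,A_{k_m}$ separately, for every $j$ there is a complete projective measurement $\Pi_{A_{k_j}}$ (in the corresponding classicality basis) with $\Pi_{A_{k_j}}[\rho]=\rho$; pinchings acting on disjoint subsystems commute, so $\Pi^\star\assign\Pi_{A_{k_1}}\otimes\cdots\otimes\Pi_{A_{k_m}}$ satisfies $\Pi^\star[\rho]=\rho$. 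Since $\Pi^\star$ is a complete projective measurement on $C$ that is moreover local with respect to the partition into the $A_{k_j}$, it is admissible whether the infimum in the statement is taken over arbitrary complete projective measurements on $C$ or only over local ones, so the argument below covers both cases simultaneously.

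The second ingredient is the uniform estimate that, for \emph{any} state $\sigma$ of $A_1\ldots A_n$ and \emph{any} complete rank-one projective measurement $\Pi$ acting on $C$ (and trivially on $\bar C$), one has $\tfrac12\|\sigma-\Pi[\sigma]\|_1\leq 1-1/d$. By joint convexity of the trace distance the supremum over $\sigma$ is attained at a pure $\proj\psi$; regarding $A_1\ldots A_n$ as the bipartition $C\otimes\bar C$ and writing $\ket\psi=\sum_\alpha\ket{e_\alpha}_C\ket{w_\alpha}_{\bar C}$ in the measurement basis $\{\ket{e_\alpha}\}$ of $C$, Lemma~\ref{lem:implicitc} (with $C$ playing the role of $A$) gives $\tfrac12\|\proj\psi-\Pi[\proj\psi]\|_1=c(\{p_\alpha\})$ with $p_\alpha=\bra{e_\alpha}\rho_C\ket{e_\alpha}$ and $\rho_C=\Tr_{\bar C}\proj\psi$. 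As the uniform distribution on $d$ points is majorized by every probability vector supported on at most $d$ outcomes, the Schur-concavity of $c$ (Lemma~\ref{lem:schurconcavity}) forces $c(\{p_\alpha\})\leq c(\{1/d,\ldots,1/d\})=1-1/d$. (Equivalently, one could invoke \eqref{eq:boundmaxdisturbancemultipartite} together with the fact, already used in Corollaries~\ref{cor:maxdis} and~\ref{cor:maxtwowaydist}, that the maximum over states is independent of the fixed measurement basis.)

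Combining the two, I would simply evaluate the objective of \eqref{eq:basicdefinition} at $\Lambda=\Pi^\star$: the summand carrying probability $q$ contributes $D_1[\rho,\Pi^\star[\rho]]=0$, while the remaining summands, of total probability $1-q$, each contribute at most $1-1/d$ by the previous paragraph; hence $Q_{D_1,\Pi_{C}}[\cE]\leq q\cdot0+(1-q)(1-1/d)$, proving the first inequality. For the ``in particular'' statement, when $\cE$ consists of $n$ states all classical on $A_{k_1},\ldots,A_{k_m}$, I would take $\rho$ to be the element of largest probability; by the pigeonhole principle this probability is at least $1/n$, so applying the first part with $q\geq 1/n$ and using $1-q\leq 1-1/n$ together with $1-1/d\geq0$ gives $Q_{D_1,\Pi_{C}}[\cE]\leq(1-1/n)(1-1/d)$.

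The only substantive point, the rest being bookkeeping, is the uniform bound $\tfrac12\|\sigma-\Pi[\sigma]\|_1\leq 1-1/d$ for a \emph{fixed} measurement: one must be certain that failing to adapt the measurement to $\sigma$ cannot push the disturbance beyond $1-1/d$, and that this remains true irrespective of the chosen basis and of whether $\Pi$ is local on $C$. This is precisely what Lemma~\ref{lem:implicitc} combined with the Schur-concavity of $c$ (Lemma~\ref{lem:schurconcavity}) supplies; the minor accompanying observation is that ``classical on each $A_{k_j}$'' upgrades to ``fixed by the product pinching $\Pi^\star$'' because pinchings on disjoint subsystems commute.
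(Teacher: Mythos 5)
Your proposal is correct and follows essentially the same route as the paper: evaluate the infimum at the measurement that leaves the classical state $\rho$ invariant (so that term contributes zero), bound each remaining term by the fixed-measurement maximal disturbance $1-1/(d_{A_{k_1}}\cdots d_{A_{k_m}})$ as in Eq.~\eqref{eq:boundmaxdisturbancemultipartite}, and use the pigeonhole argument $q\geq 1/n$ for the second claim. The only difference is that you spell out the two supporting facts the paper cites implicitly---that the product of pinchings in the local classicality bases fixes $\rho$ and is admissible in both the local and the unrestricted minimization, and that the $1-1/d$ bound holds for a \emph{fixed} basis via Lemma~\ref{lem:implicitc} and the Schur-concavity of $c$ (Lemma~\ref{lem:schurconcavity})---which is a sound filling-in of detail rather than a new argument.
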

\begin{proof} Consider a projection $\bar\Pi=\bar\Pi_{A_{k_1} A_{k_2}\ldots A_{k_m}}$ that leaves $\rho$ invariant. Without loss of generality, assume $\rho$ is the first state listed in the ensemble. Then
\beq
\begin{split}
 Q_{D_1,\{\Pi_{A_{k_1} A_{k_2}\ldots A_{k_m}}\}}(\cE)
 &=\min_{\{\Pi=\Pi_{A_{k_1} A_{k_2}\ldots A_{k_m}}\}} \sum_{i=1}^n p_i D_1(\rho_i,\Pi[\rho_i])\\
 &\leq q D_1(\rho,\bar\Pi[\rho]) +\sum_{i=2}^n p_i D_1(\rho,\bar\Pi[\rho])\\
& \leq \sum_{i=2}^n p_i \left(1-\frac{1}{d_{A_{k_1}}d_{A_{k_2}}\cdots d_{A_{k_m}}}\right)\\
& = (1-q) \left(1-\frac{1}{d_{A_{k_1}}d_{A_{k_2}}\cdots d_{A_{k_m}}}\right).
\end{split}
\eeq
The first inequality is due to the choice of a particular projection $\bar \Pi$. The second inequality comes from the fact that $\bar\Pi$ is such that $\bar\Pi[\rho]=\rho$, so that $D_1(\rho,\bar\Pi[\rho]) =0$,  and from the general bound~\eqref{eq:boundmaxdisturbancemultipartite}, applied to all the other states in the ensemble.

For the second claim, it suffices to notice that if all $n$ states in the ensemble are classical on the individual systems ${A_{k_1} A_{k_2}\ldots A_{k_m}}$ (possibly in different local orthonormal bases), then at least one of them has associated probability $q\geq 1/n$, because probabilities must sum up to 1.
\end{proof}

It is worth recalling that, if one considers a single system, every given state is classical in its eigenbasis. So, as an application of Theorem~\ref{thm:multiimprov} we find the following improvement over Eq.~\eqref{eq:boundensemblesingle}:
\beq
\label{eq:improvedclassicalquantum}
Q_{D_1,\{\Pi\}}(\mathcal{E})\leq (1-p_\textrm{max})\left(1-\frac{1}{d}\right)\leq\left(1-\frac{1}{n}\right)\left(1-\frac{1}{d}\right),
\eeq
where $d$ is the dimension of the system, $p_{\textrm{max}}$ is the largest among all probabilities with which each state appears in the ensemble, and $n$ is the number of elements in the ensemble. Notice that, because of the identity \eqref{eq:identitiestracenorm}, these bounds on the disturbance of ensembles of a single system can be immediately used to bound the disturbance of correlations of $d \times n$ quantum-classical states $\rho_{SX}$.

\subsection{Examples}

In this section we compute, or provide bounds for, the quantumness of ensembles, both for single systems and for correlations, for some interesting examples. These include qubit ensembles and uniform ensembles of pure states. As we will see, both kinds of examples seem to indicate that the general bound we found are quite good.

\subsubsection{Qubits}

We start by providing a general formula for the single-system ensemble quantumness of ensembles of qubit states.

\begin{theorem}
\label{qubit}
The $(D_1,\{\Pi\})$-quantumness
of an ensemble $\cE\assign\left\{(p_i,\rho^{(i)})\right\}_{i=1}^n$ of qubit states is given by
\begin{equation}
 Q_{D_1,\{\Pi\}}[\cE]=\frac{1}{2}\min_{\hat v\in S^2}\sum_{i=1}^np_i\|\vec r_i\|\left|\sin\left[\angle(\hat v,\vec r_i)\right]\right|,
\end{equation}
where the minimization is performed over all vectors  $\hat v$ on the Bloch sphere, $\vec r_i$ is the Bloch vector corresponding to $\rho^{(i)}$,  $\|\cdot\|\equiv\|\cdot\|_2$ is the Euclidean norm on $\bbR^3$,  $\angle(\hat v,\vec r_i)$ is the angle between the two vectors named.
\end{theorem}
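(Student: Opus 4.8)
The plan is to do everything on the Bloch sphere and reduce the trace-distance computation to an elementary geometric identity. Write $\rho^{(i)}=\frac12(\openone+\vec r_i\cdot\vec\sigma)$ with $\vec\sigma=(\sigma_x,\sigma_y,\sigma_z)$. A complete projective (von Neumann) measurement $\Pi$ on a qubit is, up to an irrelevant relabeling of the two outcomes, a pair of antipodal rank-one projectors $\proj{\hat v_\pm}=\frac12(\openone\pm\hat v\cdot\vec\sigma)$ determined by a unit vector $\hat v\in S^2$, and the associated channel $\Pi[\cdot]$ acts on Bloch vectors as orthogonal projection onto the axis spanned by $\hat v$, i.e. $\Pi[\rho^{(i)}]=\frac12\big(\openone+(\vec r_i\cdot\hat v)\,\hat v\cdot\vec\sigma\big)$. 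Thus the class $\{\Pi\}$ of measurements is parametrized by $S^2$ (the identification $\hat v\leftrightarrow-\hat v$ being immaterial, as the channel depends only on the axis).

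First I would compute the per-state disturbance for a fixed $\hat v$. One has $\rho^{(i)}-\Pi[\rho^{(i)}]=\frac12\big(\vec r_i-(\vec r_i\cdot\hat v)\hat v\big)\cdot\vec\sigma$, a traceless Hermitian matrix. For any $\vec a\in\bbR^3$, the matrix $\frac12\,\vec a\cdot\vec\sigma$ has eigenvalues $\pm\frac12\|\vec a\|$, hence trace norm $\|\vec a\|$; therefore $\frac12\|\rho^{(i)}-\Pi[\rho^{(i)}]\|_1=\frac12\|\vec r_i-(\vec r_i\cdot\hat v)\hat v\|$. Since $\vec r_i-(\vec r_i\cdot\hat v)\hat v$ is exactly the component of $\vec r_i$ orthogonal to $\hat v$, its Euclidean norm is $\|\vec r_i\|\,\big|\sin\angle(\hat v,\vec r_i)\big|$, so the disturbance of the $i$-th state is $\frac12\|\vec r_i\|\,|\sin\angle(\hat v,\vec r_i)|$.

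Then I would assemble the ensemble quantity: averaging over $i$ with weights $p_i$ and taking the infimum over $\Pi$ gives $Q_{D_1,\{\Pi\}}[\cE]=\frac12\inf_{\hat v\in S^2}\sum_{i=1}^n p_i\|\vec r_i\|\,|\sin\angle(\hat v,\vec r_i)|$. Because $S^2$ is compact and $\hat v\mapsto\sum_i p_i\|\vec r_i\|\,|\sin\angle(\hat v,\vec r_i)|$ is continuous, the infimum is attained and is a minimum (consistent with the remark in Section~\ref{sec:definitions} that the infimum in \eqref{eq:basicdefinition} is a minimum for projective measurements), which yields exactly the stated formula.

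There is no real obstacle here; the two points deserving an explicit line are the identification of qubit von Neumann measurements with points of $S^2$ together with the fact that $\Pi[\cdot]$ depends only on the measurement axis, and the spectral computation of the trace norm of $\frac12\,\vec a\cdot\vec\sigma$. I might additionally remark that the optimal $\hat v$ is in general not collinear with any single $\vec r_i$ (unlike what one might naively guess from the single- or two-state cases), but no explicit description of the minimizer is needed for the theorem.
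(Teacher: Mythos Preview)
Your proof is correct and follows essentially the same route as the paper: both work in the Bloch-sphere representation, compute $\Pi[\rho]$ as projection of the Bloch vector onto the $\hat v$-axis, and then identify the trace distance with the Euclidean length of the orthogonal component $\vec r_i-(\vec r_i\cdot\hat v)\hat v$. Your explicit compactness/continuity remark justifying that the infimum is attained is a small addition not spelled out in the paper, but otherwise the arguments coincide.
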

\begin{proof}
 For some qubit state $\rho=\fr{1}{2}\left(\eins+\vec r\cdot\vec\sigma\right)$, if $\Pi$ is the projective measurement along a basis corresponding to the unit vector $\hat v$, then
\begin{align}
 \Pi[\rho]&=\fr{1}{2^3}\left(\eins+\hat v\cdot\vec\sigma\right)\left(\eins+\vec r\cdot\vec\sigma\right)\left(\eins+\hat v\cdot\vec\sigma\right)+\fr{1}{2^3}\left(\eins-\hat v\cdot\vec\sigma\right)\left(\eins+\vec r\cdot\vec\sigma\right)\left(\eins-\hat v\cdot\vec\sigma\right)\nonumber\\
&=\fr{1}{2}\left(\eins+(\hat v\cdot\vec r)\hat v\cdot\vec\sigma\right).
\end{align}
Therefore,
\begin{align}
 \left\|\rho-\Pi[\rho]\right\|_1&=\left\|\fr{1}{2}\left(\eins+\vec r\cdot\vec\sigma-\eins-(\hat v\cdot\vec r)\hat v\cdot\vec\sigma\right)\right\|_1\nonumber\\
&=\fr{1}{2}\left\|\left(\vec r-(\hat v\cdot\vec r)\hat v\right)\cdot\vec\sigma\right\|_1\nonumber\\
&=\left\|\vec r-(\hat v\cdot\vec r)\hat v\right\|\nonumber\\
&=\|\vec r\|\left|\sin\left[\angle(\hat v,\vec r)\right]\right|.
\end{align}
The result follows by the definition of $Q_{D_1,\{\Pi\}}$.
\end{proof}

In the case where the qubit ensemble contains two elements, we are able to provide an explicit analytical formula.

\begin{corollary}
 In the case of an ensemble consisting of two one-qubit states (the case $n=2$ in Theorem \ref{qubit}), the minimum comes out to be
\begin{equation}
 Q_{D_1,\{\Pi\}}\left[\left\{(p,\rho^{(1)}),(1-p,\rho^{(2)})\right\}\right]=\frac{1}{2}\left|\sin\left[\angle(\vec r_1,\vec r_2)\right]\right|\min\left[p\|\vec r_1\|,(1-p)\|\vec r_2\|\right].
\end{equation}
\end{corollary}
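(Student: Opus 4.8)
The plan is to apply Theorem~\ref{qubit} with $n=2$ and carry out the one-parameter-family minimization over the Bloch-sphere direction $\hat v$ explicitly. Writing $\cE=\left\{(p,\rho^{(1)}),(1-p,\rho^{(2)})\right\}$, Theorem~\ref{qubit} gives
\[
Q_{D_1,\{\Pi\}}[\cE]=\frac{1}{2}\min_{\hat v\in S^2}\left(p\,\|\vec r_1\|\,|\sin\angle(\hat v,\vec r_1)|+(1-p)\,\|\vec r_2\|\,|\sin\angle(\hat v,\vec r_2)|\right).
\]
The first observation is that the objective depends on $\hat v$ only through the two angles it makes with $\vec r_1$ and $\vec r_2$, so without loss of generality $\hat v$ lies in the plane spanned by $\vec r_1$ and $\vec r_2$ (moving $\hat v$ into that plane can only decrease both angular distances, hence both sine terms, since $|\sin|$ is increasing on $[0,\pi/2]$ and the relevant angles can be taken in $[0,\pi]$ with the function $\theta\mapsto|\sin\theta|$ being "distance-like" on the great circle). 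Thus the problem reduces to a genuinely one-dimensional minimization: let $\alpha\assign\angle(\vec r_1,\vec r_2)$ and parametrize $\hat v$ in that plane by the angle $\theta=\angle(\hat v,\vec r_1)\in[0,\pi]$, so $\angle(\hat v,\vec r_2)$ is either $|\alpha-\theta|$ or $\alpha+\theta$ (mod the great-circle metric); one checks the minimizing $\hat v$ lies "between" $\vec r_1$ and $\vec r_2$ along the short arc, so $\angle(\hat v,\vec r_2)=\alpha-\theta$ for $\theta\in[0,\alpha]$, and we must minimize
\[
g(\theta)\assign p\,\|\vec r_1\|\,\sin\theta+(1-p)\,\|\vec r_2\|\,\sin(\alpha-\theta),\qquad \theta\in[0,\alpha].
\]

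The second step is to minimize $g$ over $[0,\alpha]$. Since $g$ is a sum of two sinusoids it is (on this interval, with $\alpha\le\pi$) of the form $A\sin\theta+B\sin(\alpha-\theta)$ with $A,B\ge0$; expanding, $g(\theta)=(A-B\cos\alpha)\sin\theta+(B\sin\alpha)\cos\theta$, which is $\rho_0\sin(\theta+\varphi)$ for suitable amplitude and phase, hence has no interior minimum on the interval unless it is monotone — in any case the minimum of such a function on a closed interval is attained at an endpoint (a concave-type sinusoidal arc attains its min at an endpoint; more carefully, $g''=-g$ so $g$ is concave wherever positive, and a concave function on an interval attains its minimum at an endpoint). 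Therefore
\[
\min_{\theta\in[0,\alpha]}g(\theta)=\min\{g(0),g(\alpha)\}=\min\{(1-p)\|\vec r_2\|\sin\alpha,\ p\|\vec r_1\|\sin\alpha\}=|\sin\alpha|\,\min\{p\|\vec r_1\|,(1-p)\|\vec r_2\|\},
\]
using $\sin\alpha=|\sin\alpha|\ge0$ for $\alpha\in[0,\pi]$. Dividing by $2$ yields exactly the claimed formula. One should also note the degenerate cases: if $\vec r_1\parallel\vec r_2$ (so $\alpha\in\{0,\pi\}$) one can pick $\hat v$ along that common line and the disturbance vanishes, consistent with $\sin\alpha=0$; the formula is correct there too.

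The main obstacle is the first step — rigorously reducing the $S^2$ minimization to the planar, one-variable problem and pinning down that the optimal $\hat v$ lies on the \emph{short} arc between $\vec r_1$ and $\vec r_2$ rather than, say, antipodal to one of them. The clean way to handle it is to note that $\hat v\mapsto|\sin\angle(\hat v,\vec r)|$ equals the geodesic-distance-based quantity $\sin(d_{S^2}(\hat v,\pm\hat r))$ where $\hat r=\vec r/\|\vec r\|$, so each term is a function of the spherical distance from $\hat v$ to a \emph{pair} of antipodal points; projecting $\hat v$ onto the great circle through $\hat r_1$ and $\hat r_2$ does not increase either spherical distance to the nearer representative, hence does not increase either term, so an optimal $\hat v$ may be taken on that great circle. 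On the circle, a short case analysis over the four arcs cut out by $\pm\hat r_1,\pm\hat r_2$ shows the arc "between" $\hat r_1$ and $\hat r_2$ (the one subtending angle $\alpha$) is the only one that can contain the minimizer, reducing to the $g(\theta)$ above. Everything after that is the routine trigonometric optimization sketched in the previous paragraph.
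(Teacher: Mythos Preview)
Your argument is correct. The reduction to the plane spanned by $\vec r_1,\vec r_2$ is the same move the paper makes (and your justification via $(\hat v\cdot\hat r)^2=\cos^2\phi\,(\hat v'\cdot\hat r)^2$, so that projecting $\hat v$ to the plane can only decrease each $|\sin\angle(\hat v,\vec r_i)|$, is arguably cleaner than the paper's ``fix one angle and optimize the other'').

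Where you and the paper genuinely diverge is the one-variable minimization. You write $g(\theta)=A\sin\theta+B\sin(\alpha-\theta)$, observe $g''=-g\leq 0$ on $[0,\alpha]$, and conclude by concavity that the minimum sits at an endpoint, giving $\min(A,B)\sin\alpha$ directly. The paper instead uses a synthetic-geometric argument: with $R_i$ the tips of $p_i\vec r_i$ and $P$ the intersection of the $\hat v$-line with the segment $R_1R_2$, the perpendicular distances $d(R_i,T_i)$ satisfy $\tfrac12\big(d(R_1,T_1)+d(R_2,T_2)\big)\,d(O,P)=\mathrm{Area}(OR_1R_2)=\text{const}$, so minimizing the objective is equivalent to maximizing $d(O,P)$, which forces $\hat v$ parallel to the longer of $p\vec r_1,(1-p)\vec r_2$. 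Your concavity route is shorter and avoids the picture, and as a bonus it makes the ``four-arc'' case analysis you worried about unnecessary: on the complementary arc $[\alpha,\pi]$ the objective is $A\sin\theta+B\sin(\theta-\alpha)$, again nonnegative with second derivative $-g$, hence concave there too, with the \emph{same} endpoint values $A\sin\alpha$ and $B\sin\alpha$. The paper's area argument, on the other hand, identifies the actual optimizer (alignment with the longer weighted Bloch vector), which your endpoint computation recovers only implicitly.
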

\begin{proof}
It is clear that the we should choose $\hat v$ to lie in the plane defined by $\vec r_1$ and $\vec r_2$. This is because, for fixed angle between $\hat v$ and $\vec r_1$, the smallest angle between $\hat v$ and $\vec r_2$ is achieved for $\hat v$ lying in such a plane. Having reduced the problem to a two-dimensional one, we can prove the claim by simply considering Figure~\ref{fig:CQqubit}.

\begin{figure}[htbp]
\begin{center}
\includegraphics[scale=0.5]{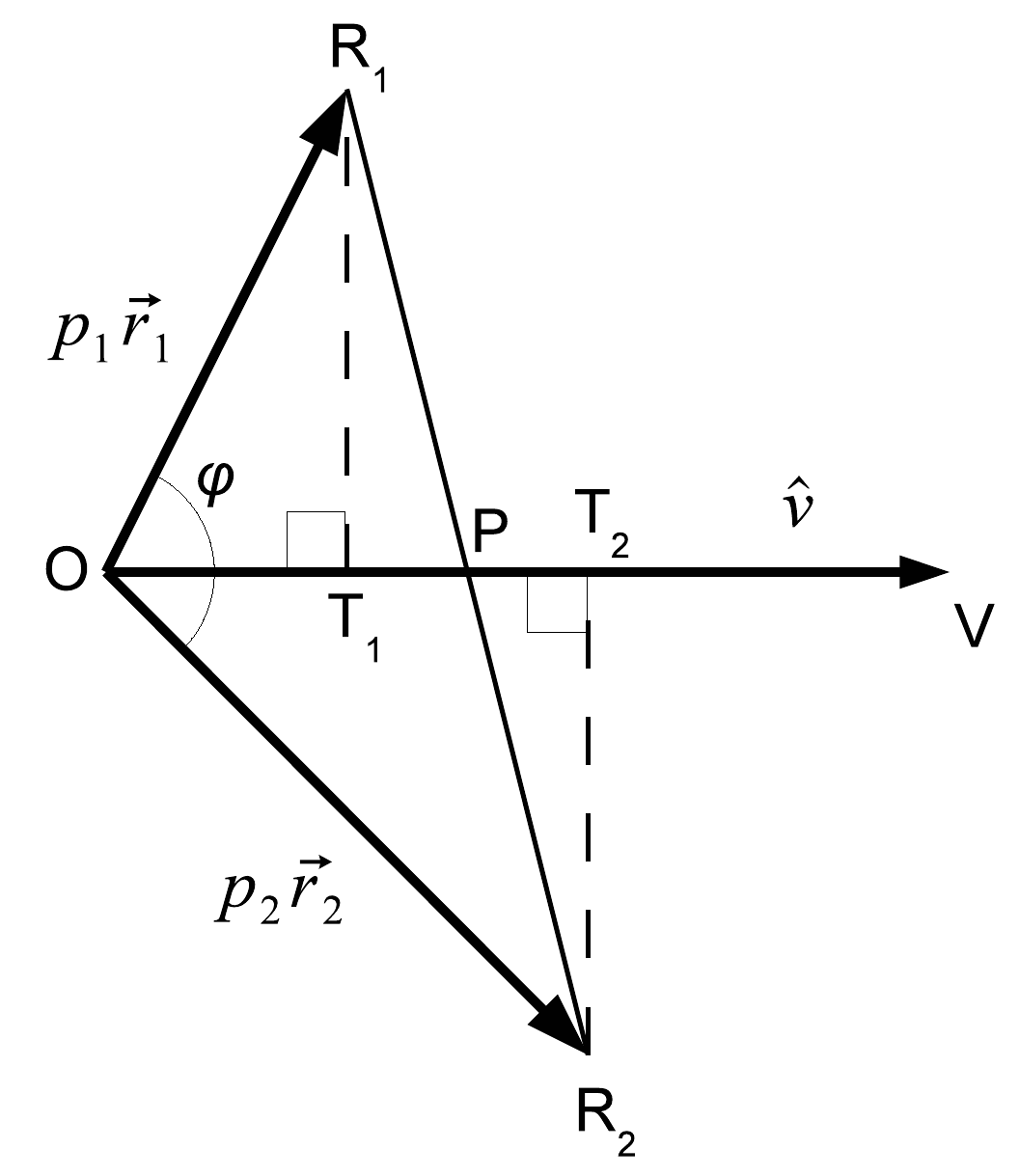}
\caption{Calculation of optimal projective measurement for the least disturbance for an ensemble of two qubit states. $\hat v$ indicates the direction of the projective measurement; $p_i\vec r_i$, $i=1,2$, are the rescaled (by the probability in the ensemble) Bloch vectors of the two states. $\varphi$ is the angle between such rescaled Bloch vectors. The optimal $\hat v$ can always be chosen in the plane defined by the two Bloch vectors, so the problem is a two-dimensional one. The point O corresponds to the centre of the Bloch sphere, and the point P is the point of intersection of $\hat v$ and the segment $R_1R_2$ connecting the endpoints of the two rescaled Bloch vectors.}
\label{fig:CQqubit}
\end{center}
\end{figure}

In terms of the geometric elements present in Figure~\ref{fig:CQqubit}, our objective function can be recast as
\[
\begin{split}
Q_{D_1,\{\Pi\}}\left[\left\{(p,\rho^{(1)}),(1-p,\rho^{(2)})\right\}\right] &= \frac{1}{2}\min_{\hat v\in S^2}\left(p_1\|\vec r_1\|\left|\sin\left[\angle(\hat v,\vec r_1)\right]\right| + p_2\|\vec r_2\|\left|\sin\left[\angle(\hat v,\vec r_1)\right]\right|\right)\\
&=\min_{\hat v\in S^2}\frac{1}{2} (d(R_1,T_1)+d(R_2,T_2)),
\end{split}
\]
where we have used the notation $d(X,Y)$ to denote the Euclidean distance between two points $X$ and $Y$.
We now notice that $\frac{1}{2}d(R_1,T_1)d(O,P)$ is the area of the triangle $OPR_1$. Similarly, $\frac{1}{2}d(R_2,T_2)d(O,P)$ is the area of the triangle $OPR_2$. The sum of the two areas gives the area of the triangle $OR_1R_2$, independently of the position of $P$, i.e. independently of the choice of $\vec v$ in the plane. So
\[
\frac{1}{2}d(R_1,T_1)d(O,P)+\frac{1}{2}d(R_2,T_2)d(O,P) = \frac{1}{2} (d(R_1,T_1)+d(R_2,T_2))d(O,P)=\textrm{const}.
\]
Thus, it is clear that $\hat v$ should be chosen  to maximize $d(O,P)$. This can be done by choosing the measurement axis $\hat v$ parallel to the longest $p_i\vec r_i$. In such a case $\frac{1}{2} (d(R_1,T_1)+d(R_2,T_2))=\frac{1}{2}\left|\sin\left[\angle(\vec r_1,\vec r_2)\right]\right|\min\left[p\|\vec r_1\|,(1-p)\|\vec r_2\|\right]$.
\end{proof}

Via the relations~\eqref{eq:identitiestracenorm}, which equate the quantumness of correlations of a classical-quantum system to the quantumness of a single-system ensemble, we conclude that the classical-quantum state of two qubits that exhibits the largest one-sided quantumness of correlations, as measured by trace-distance disturbance, is, up to local unitaries, the state
\beq
\label{eq:maxCQ}
\frac{1}{2}\proj{0}\otimes\proj{0}+\frac{1}{2}\proj{+}\otimes\proj{1},
\eeq
for which $Q_{D_1,{\Pi_A}} = 1/4$. It is worth remarking that this quantumness matches the upper bound \eqref{eq:improvedclassicalquantum}. The same state~\eqref{eq:maxCQ} is the classical-quantum state exhibiting the largest quantumness of correlations also according to the entropic disturbance~\cite{gharibian2011characterizing}.

\subsubsection{Uniform ensembles}

In this section we will consider uniform ensembles of pure states, that is, ensembles $\mathcal{E}_{\textrm{Haar}}$ of pure states distributed according to the Haar measure~\cite{bengtsson2006geometry}. We will begin with the calculation of the trace-distance single-system ensemble quantumness, $Q_{D_1,\{\Pi\}}$, and move later to the trace-distance ensemble quantumness of correlations, both one-sided, $Q_{D_1,\{\Pi_A\}}$, and two-sided, $Q_{D_1,\{\Pi_A\otimes\Pi_B\}}$.

By symmetry considerations~\footnote{Alternatively, it can be checked.}, it is clear that the choice of basis for the measurement is irrelevant: the average disturbance is going to be the same in all local bases. So
\[
Q_{D_1,\{\Pi\}}(\mathcal{E}_{\textrm{Haar}})=\min_\Pi\int \ud\psi \frac{1}{2}\|\proj{\psi}-\Pi[\proj{\psi}]\|_1=\int \ud\psi \frac{1}{2}\|\proj{\psi}-\Pi[\proj{\psi}]\|_1,
\]
where the projection on the rightmost-hand side is fixed arbitrarily.
We find
\[
\begin{split}
\int \ud\psi \frac{1}{2}\|\proj{\psi}-\Pi[\proj{\psi}]\|_1&\geq \int \ud\psi (1- \bra{\psi}\Pi[\proj{\psi}]\ket{\psi}))\\
&= 1 -\int \ud\psi \Tr(\Pi[\proj{\psi}]^2)\\
&= 1 - \int \ud\psi \Tr\left((\Pi[\proj{\psi}]\otimes\Pi[\proj{\psi}] )W\right)\\
&= 1 - \int \ud\psi \Tr\left((\proj{\psi}\otimes\proj{\psi}) \left((\Pi\otimes\Pi)[V]\right)\right)\\
&= 1 - \Tr\left(\left(\int \ud\psi \proj{\psi}\otimes\proj{\psi}\right) (\Pi\otimes\Pi)[W]\right)\\ 
&= 1 - \Tr\left(\frac{\openone+W}{d(d+1)}(\Pi\otimes\Pi)[W]\right)\\ 
&= 1 -\Tr\left(\frac{\openone+W}{d(d+1)}\sum_i{\proj{i}}\otimes\proj{i}\right)\\
&= 1 -\frac{2d}{d(d+1)}\\
&= 1 -\frac{2}{d+1}.\\
\end{split}
\]
In the above we have introduced the swap operator $W$ acting on two copies of the Hilbert space according to $W\ket{\alpha}\ket{\beta}=\ket{\beta}\ket{\alpha}$. For any arbitrary choice of orthonormal basis $\{\ket{i}\}$, $W$ admits the decomposition $W=\sum_{i,j} \ket{i}\bra{j}\otimes\ket{j}\bra{i}$. We used two useful standard identities involving $W$: $\Tr_{1,2}((X_1\otimes Y_2) W_{1,2})=\Tr(XY)$ and $\int \ud\psi \proj{\psi}\otimes\proj{\psi}=\frac{\openone+W}{d(d+1)}$~\footnote{The first identity can be checked by direct inspection; the second identity can be proved by invoking Schur's lemma in representation theory}. On the other hand,
\[
\begin{split}
Q_{D_1,\{\Pi\}}(\mathcal{E}_{\textrm{Haar}})
&\leq \int \ud\psi \sqrt{1- \bra{\psi}\Pi[\proj{\psi}]\ket{\psi}}\\
&\leq\sqrt{1- \int \ud\psi \bra{\psi}\Pi[\proj{\psi}]\ket{\psi}}\\
&= \sqrt{1 -\frac{2}{d+1}}\\
&=1 - \frac{1}{d+1} + O(d^{-2})
\end{split}
\]
In Corollary~\ref{cor:maxdis} we had found that the greatest trace-distance disturbance, maximized over states, is equal to $1-1/d$ in dimension $d$, and used the same value to bound the (single-system) ensemble disturbance. From the calculations above we see that the ensemble disturbance over the Haar ensemble is essentially the maximal one.

We move now to the one-sided trace-distance quantumness of correlations, $Q_{D_1,\{\Pi_{A}\}}$. We can actually follow several of the steps above to arrive to
\begin{multline*}
Q_{D_1,\{\Pi_A\}}(\mathcal{E}_{\textrm{Haar}})\\
\begin{split}
&\geq 1 - \Tr\left(\frac{\openone_{A_1A_2B_1B_2}+W_{A_1B_1:A_2B_2}}{d_{AB}(d_{AB}+1)}(\Pi_{A_1}\otimes\Pi_{A_2})[W_{A_1B_1:A_2B_2}]\right)\\ 
&= 1 - \Tr\left(\frac{\openone_{A_1A_2B_1B_2}+W_{A_1:A_2}\otimes W_{B_1:B_2}}{d_{AB}(d_{AB}+1)}(\Pi_{A_1}\otimes\Pi_{A_2})[W_{A_1:A_2}\otimes W_{B_1:B_2}]\right)\\ 
&= 1 - \Tr\left(\frac{\openone_{A_1A_2B_1B_2}+W_{A_1:A_2}\otimes W_{B_1:B_2}}{d_{AB}(d_{AB}+1)}\left(\sum_i\proj{i}_{A_1}\otimes\proj{i}_{A_2}\right)\otimes W_{B_1:B_2}\right)\\ 
&=1 - \frac{d_{A}d_B+d_Ad_B^2}{d_{AB}(d_{AB}+1)}\\
&=1-\frac{d_B+1}{d_{A}d_{B}+1}.
\end{split}
\end{multline*}
In the derivation, we have used $d_{AB}=d_A d_B$ and that the swap operator between two copies $A_1B_1$ and $A_2B_2$ of $AB$ can be written as the product of the swaps of the copies of the subsystems, i.e., $W_{A_1B_1:A_2B_2}=W_{A_1:A_2}\otimes W_{B_1:B_2}$. Similarly as before, we also find the upper bound
\[
Q_{D_1,\{\Pi_A\}}(\mathcal{E}_{\textrm{Haar}})\leq \sqrt{1-\frac{d_B+1}{d_{A}d_{B}+1}}
\]
Notice that also this case the average disturbance is comparable with the maximal disturbance for local projective measurements that we computed, $1-1/d_A$.

Finally, it should be clear from the steps in the proof above that, when we consider the two-sided ensemble quantumness of correlations, we go back to the bounds that we obtained for a single system, just with the dimension equal to the total dimension, $d=d_{AB}=d_Ad_B$, obtaining
\[
1-\frac{2}{d_Ad_B+1} \leq Q_{D_1,\{\Pi_A\otimes\Pi_B\}}(\mathcal{E}_{\textrm{Haar}})\leq \sqrt{1-\frac{2}{d_Ad_B+1}}.
\]
Notice that this proves that a bound like \eqref{eq:boundsingle} cannot hold in the case of ensembles, even if it does for single states, and that the dependence on the total dimension of the ensemble bound \eqref{eq:boundensemble} is optimal up to constant factors, at least asymptotically, i.e., for large dimensions.

\section{Quantum data hiding and ensemble quantumness of correlations}
\label{sec:datahiding}

We concern ourselves with hiding classical bits using pairs of quantum states. An $(\epsilon,\delta)$-hiding pair of bipartite states $(\rho_{AB},\sigma_{AB})$ has the properties
$$\fr{1}{2}\left\|\rho_{AB}-\sigma_{AB}\right\|_1=1-\epsilon,$$
$$\fr{1}{2}\left\|\rho_{AB}-\sigma_{AB}\right\|_\mathrm{LOCC}=\delta,$$
where LOCC is meant under the partition $A:B$.  Both the trace distance and the LOCC distance are related to the optimal probability of success in identifying correctly the state, assuming each state is prepared with equal probability, either by global measurements or LOCC measurements~\cite{Matthews2009}:
\begin{align*}
p^\textrm{success}_\textrm{global}&=\frac{1}{2}\left(1+\frac{1}{2}\|\rho_{AB}-\sigma_{AB}\|_1\right)\\
p^\textrm{success}_\textrm{LOCC}&=\frac{1}{2}\left(1+\frac{1}{2}\|\rho_{AB}-\sigma_{AB}\|_\mathrm{LOCC}\right).
\end{align*}
The $\|\cdot\|_\mathrm{LOCC}$ norm is defined on bipartite Hermitian operators as~\cite{Matthews2009}
\[
\|X\|_\mathrm{LOCC}=\max_{\{M_i\}}\sum_i|\Tr(M_iX)|,
\]
where the maximum is taken over all POVMs $\{M_i\}$ that can be realized by LOCC. In the following we will need that
\beq
\label{eq:hierarchymeasurements}
\begin{split}
\fr{1}{2}\left\|\rho_{AB}-\sigma_{AB}\right\|_\mathrm{LOCC}&=\max_{\mathcal{M}_\textrm{LOCC}}\|\mathcal{M}_\textrm{LOCC}[\rho_{AB}-\sigma_{AB}]\|_1\\
&\geq\max_{\mathcal{M}_\textrm{1-LOCC}}\|\mathcal{M}_\textrm{1-LOCC}[\rho_{AB}-\sigma_{AB}]\|_1\\
&\geq\max_{\Pi_A}\|\Pi_A[\rho_{AB}-\sigma_{AB}]\|_1.
\end{split}
\eeq
Here $\mathcal{M}_\textrm{LOCC}[\tau_{AB}]=\sum_i \Tr(M_iX) \proj{i}$ is any LOCC-measurement map~\cite{Matthews2009,piani2009relative}, which comprises LOCC measurements realized by one-way classical communication (1-LOCC), which in turn include measurement schemes where Alice performs a complete projective measurement and communicates the result to Bob. In the latter case, the fact that Bob performs an optimal measurement that depends on the outcome of Alice's projective measurement is automatically taken into account by the definition of the trace norm used in the last line of \eqref{eq:hierarchymeasurements}.

A ``good'' quantum data hiding scheme seeks a pair of states such that both $\epsilon$ and $\delta$ are small positive numbers. The point is that we want to consider a pair of bipartite states that are (almost) perfectly distinguishable by global operations but almost indistinguishable by LOCC. We can define a single parameter for the quality of the hiding scheme in the following way.
\begin{definition}
The \emph{hiding capability} of a pair $(\rho_{AB},\sigma_{AB})$ of states is given by 
\begin{equation}
 \Delta_\mathrm H[\rho_{AB},\sigma_{AB}]\assign\fr{1}{2}\bigg(\left\|\rho_{AB}-\sigma_{AB}\right\|_1-\left\|\rho_{AB}-\sigma_{AB}\right\|_\mathrm{LOCC}\bigg)= 1- \delta - \epsilon.
\end{equation}
\end{definition}
It is clear that $\delta,\epsilon\ll 1$ if and only if $\Delta_\mathrm H[\rho_{AB},\sigma_{AB}]\approx 1$.

It is known that there are good hiding schemes---i.e., with $\Delta_\mathrm H\approx 1$---that do not make use of entanglement~\cite{eggeling2002hiding}.  In the following we will see that, even if entanglement is not strictly needed, some form of quantumness of correlations (in particular, ensemble quantumness) must be present for a hiding scheme to be good. Before getting to such a result, we present evidence that even using one classical state in the pair of hiding states makes the task of quantum data hiding somewhat harder, although not impossible, as the existing constructions present in literature show.

\subsection{On hiding with classical states}
Here we prove a theorem on the limitations of hiding using classical states. For this, we make use of the following observation.
\begin{lemma}
\label{lem:domination}
For any two normalized density operators $\rho$ and $\sigma$ on some Hilbert space $\cH$,
\begin{equation}
\label{eq:statedom}
 \frac{1}{\min\{R(\rho),R(\sigma)\}}F^2(\rho,\sigma)\leq\Tr(\rho\sigma)\le F^2(\rho,\sigma),
\end{equation}
where $F(\rho,\sigma)\assign\Tr\sqrt{\sqrt{\rho}\sigma\sqrt{\rho}}$ is the fidelity of the two states and $R(\rho)$ and $R(\sigma)$ are the ranks of $\rho$ and $\sigma$, respectively.
\end{lemma}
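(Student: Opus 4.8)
The plan is to prove the two inequalities separately, as they are quite different in character. The upper bound $\Tr(\rho\sigma)\le F^2(\rho,\sigma)$ is the standard comparison between the ``Hilbert--Schmidt overlap'' and the (squared) fidelity; I would obtain it by writing $\Tr(\rho\sigma)=\Tr\bigl((\sqrt{\rho}\,\sqrt{\sigma})(\sqrt{\sigma}\,\sqrt{\rho})\bigr)=\nrm{\sqrt{\rho}\,\sqrt{\sigma}}_2^2$ and comparing the Hilbert--Schmidt ($\ell_2$ Schatten) norm of $\sqrt{\rho}\,\sqrt{\sigma}$ with its trace ($\ell_1$ Schatten) norm. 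Since $\nrm{X}_2\le\nrm{X}_1$ for any operator, and $\nrm{\sqrt{\rho}\,\sqrt{\sigma}}_1=\Tr\bigl|\sqrt{\rho}\,\sqrt{\sigma}\bigr|=\Tr\sqrt{\sqrt{\rho}\,\sigma\sqrt{\rho}}=F(\rho,\sigma)$, squaring gives $\Tr(\rho\sigma)\le F^2(\rho,\sigma)$ immediately.

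For the lower bound, the idea is to run the same comparison in the reverse direction but pay the price of the rank. Set $X=\sqrt{\rho}\,\sqrt{\sigma}$, so $\Tr(\rho\sigma)=\nrm{X}_2^2$ and $F(\rho,\sigma)=\nrm{X}_1$. For a matrix of rank $r$, the Cauchy--Schwarz inequality applied to the (nonzero) singular values yields $\nrm{X}_1\le\sqrt{r}\,\nrm{X}_2$, i.e.\ $\nrm{X}_2^2\ge\nrm{X}_1^2/r$. The rank of $X=\sqrt{\rho}\,\sqrt{\sigma}$ is at most $\min\{R(\rho),R(\sigma)\}$, since $\rank(AB)\le\min\{\rank A,\rank B\}$ and $\rank\sqrt{\rho}=R(\rho)$, $\rank\sqrt{\sigma}=R(\sigma)$. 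Combining, $\Tr(\rho\sigma)=\nrm{X}_2^2\ge\nrm{X}_1^2/\min\{R(\rho),R(\sigma)\}=F^2(\rho,\sigma)/\min\{R(\rho),R(\sigma)\}$, which is the claimed inequality.

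The only subtlety — and the one place to be a little careful — is the identification $F(\rho,\sigma)=\nrm{\sqrt{\rho}\,\sqrt{\sigma}}_1$. This is the standard fact that $\Tr\sqrt{\sqrt{\rho}\,\sigma\sqrt{\rho}}=\nrm{\sqrt{\sigma}\,\sqrt{\rho}}_1$, which follows because for any operator $Y$ one has $\Tr|Y|=\Tr\sqrt{Y^\dagger Y}$ with $Y=\sqrt{\sigma}\,\sqrt{\rho}$ giving $Y^\dagger Y=\sqrt{\rho}\,\sigma\sqrt{\rho}$, together with $\nrm{Y}_1=\nrm{Y^\dagger}_1$. Everything else is a one-line application of the relation $\nrm{X}_2\le\nrm{X}_1\le\sqrt{\rank X}\,\nrm{X}_2$ between Schatten norms, so I expect no real obstacle; the main thing is simply to present the two-sided Schatten-norm sandwich cleanly and to track that the relevant operator has rank at most $\min\{R(\rho),R(\sigma)\}$.
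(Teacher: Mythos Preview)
Your proposal is correct and essentially identical to the paper's proof: both set $X=\sqrt{\rho}\sqrt{\sigma}$, identify $F(\rho,\sigma)=\nrm{X}_1$ and $\Tr(\rho\sigma)=\nrm{X}_2^2$, and then apply the two-sided Schatten-norm comparison $\nrm{X}_2\le\nrm{X}_1\le\sqrt{R(X)}\,\nrm{X}_2$ together with $R(X)\le\min\{R(\rho),R(\sigma)\}$. Your extra justification of the identity $F(\rho,\sigma)=\nrm{\sqrt{\rho}\,\sqrt{\sigma}}_1$ is a nice touch that the paper simply asserts.
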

\begin{proof}
We will use the fact that for any matrix $X$ it holds
\beq
\|X\|_2\leq \|X\|_1 \leq \sqrt{R(X)} \|X\|_2
\eeq
with $\|X\|_1=\Tr\sqrt{X^\dagger X}$ the $1$-norm (also known as trace norm) of $X$ and $\|X\|_2=\sqrt{\Tr (X^\dagger X)}$ the $2$-norm (also known as Hilbert-Schmidt norm) of $X$. Then the lemma is proved by simply considering $X=\sqrt{\rho}\sqrt{\sigma}$, since $F(\rho,\sigma)=\|\sqrt{\rho}\sqrt{\sigma}\|_1$ and $\Tr(\rho\sigma)=\|\sqrt{\rho}\sqrt{\sigma}\|_2^2$, and by noting that $R(YZ)\leq\min\{R(Y),R(Z)\}$, while $R(\sqrt{\rho})=R(\rho)$ (similarly for $\sigma$). 
\end{proof}

We are now ready to prove the theorem.
\begin{theorem}
 If a classical-quantum (w.r.t. the partition $A:B$) bipartite state $\rho_{AB}=\sum_i p_i \proj{i}\otimes \rho_i$  is $\epsilon$-distinguishable from another bipartite state $\sigma_{AB}$ under global operations, i.e.,
\beq
\label{eq:hypothmlimit}
\fr{1}{2}\|\rho_{AB}-\sigma_{AB}\|_1\geq 1-\epsilon,
\eeq
and $R=\min\{R(\rho),R(\tilde{\sigma})\}\leq R(\rho)$ is the lesser of the ranks of $\rho$ and $\tilde{\sigma}$, with $\tilde{\sigma}=\sum_i \proj{i}_A \sigma \proj{i}_A$, then $\rho$ is at least $\sqrt{2R \epsilon}$-distinguishable from $\sigma_{AB}$ under $\mathrm{LOCC}$ w.r.t. $A:B$, i.e., 
$$\fr{1}{2}\|\rho_{AB}-\sigma_{AB}\|_\mathrm{LOCC}\ge1-\sqrt{2R \epsilon}.$$
\end{theorem}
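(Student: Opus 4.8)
The plan is to exploit that, since $\rho_{AB}$ is classical on $A$, the complete local von Neumann measurement $\Pi_A$ in the basis $\{\ket i\}$ leaves it untouched, $\Pi_A[\rho_{AB}]=\rho_{AB}$, while it turns $\sigma_{AB}$ into $\widetilde\sigma=\Pi_A[\sigma_{AB}]=\sum_i\proj i_A\sigma_{AB}\proj i_A$. This $\Pi_A$ (followed by Alice communicating her outcome to Bob) is a legitimate LOCC strategy, so the hierarchy \eqref{eq:hierarchymeasurements}, applied to this particular $\Pi_A$, gives
\[
\frac{1}{2}\|\rho_{AB}-\sigma_{AB}\|_{\mathrm{LOCC}}\ \geq\ \frac{1}{2}\|\Pi_A[\rho_{AB}]-\Pi_A[\sigma_{AB}]\|_1\ =\ \frac{1}{2}\|\rho_{AB}-\widetilde\sigma\|_1 .
\]
It therefore suffices to show $\frac{1}{2}\|\rho_{AB}-\widetilde\sigma\|_1\geq 1-\sqrt{2R\epsilon}$ starting from the global-distinguishability hypothesis \eqref{eq:hypothmlimit}.

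To connect $\widetilde\sigma$ back to $\sigma_{AB}$ I would route everything through the overlap $\Tr(\rho_{AB}\sigma_{AB})$, using Lemma~\ref{lem:domination} in \emph{both} of its directions. First, the Fuchs--van de Graaf estimate $\frac{1}{2}\|\rho-\sigma\|_1\leq\sqrt{1-F^2(\rho,\sigma)}$ combined with \eqref{eq:hypothmlimit} forces $F^2(\rho_{AB},\sigma_{AB})\leq 1-(1-\epsilon)^2\leq 2\epsilon$, and then the \emph{upper} bound in \eqref{eq:statedom} yields $\Tr(\rho_{AB}\sigma_{AB})\leq F^2(\rho_{AB},\sigma_{AB})\leq 2\epsilon$. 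Next I observe that the pinching $\Pi_A$ is self-adjoint for the Hilbert--Schmidt inner product, so
\[
\Tr(\rho_{AB}\widetilde\sigma)=\Tr\big(\rho_{AB}\,\Pi_A[\sigma_{AB}]\big)=\Tr\big(\Pi_A[\rho_{AB}]\,\sigma_{AB}\big)=\Tr(\rho_{AB}\sigma_{AB})\leq 2\epsilon ,
\]
the third equality again using $\Pi_A[\rho_{AB}]=\rho_{AB}$.

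Now I would apply the \emph{lower} bound in \eqref{eq:statedom} to the pair $(\rho_{AB},\widetilde\sigma)$ --- both are genuine density operators, $\widetilde\sigma$ because pinching preserves the trace --- obtaining $F^2(\rho_{AB},\widetilde\sigma)\leq R\,\Tr(\rho_{AB}\widetilde\sigma)\leq 2R\epsilon$ with $R=\min\{R(\rho_{AB}),R(\widetilde\sigma)\}$, exactly as in the statement. Finally, the companion Fuchs--van de Graaf estimate $\frac{1}{2}\|\rho-\sigma\|_1\geq 1-F(\rho,\sigma)$ gives $\frac{1}{2}\|\rho_{AB}-\widetilde\sigma\|_1\geq 1-F(\rho_{AB},\widetilde\sigma)\geq 1-\sqrt{2R\epsilon}$; chaining this with the opening display yields the claim (which is vacuous if $2R\epsilon>1$, so one may assume $2R\epsilon\leq 1$).

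The only genuine idea here is to compare $\rho_{AB}$ with its own \emph{dephased partner} $\widetilde\sigma$ rather than with $\sigma_{AB}$ directly, and then to run Lemma~\ref{lem:domination} in opposite directions at the two ends of the estimate: the tight bound $\Tr(\rho\sigma)\leq F^2$ to harvest the hypothesis, and the lossy bound $F^2\leq R\,\Tr(\rho\sigma)$ --- which is precisely where the rank $R$, and hence the $\sqrt{R}$ weakening of the bound, enters --- to turn the overlap estimate back into a trace-distance, and thus LOCC-distinguishability, statement. I do not anticipate any real obstacle beyond arranging these steps in the correct order; the remaining ingredients (invariance of $\rho_{AB}$ under $\Pi_A$, the overlap identity under pinching, the two Fuchs--van de Graaf inequalities) are standard.
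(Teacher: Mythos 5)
Your proposal is correct and follows essentially the same route as the paper: a projective measurement on $A$ in the classical basis as the LOCC strategy, then Lemma~\ref{lem:domination} used in both directions together with the two Fuchs--van de Graaf inequalities and the pinching identity $\Tr(\rho_{AB}\tilde\sigma)=\Tr(\rho_{AB}\sigma_{AB})$. The only difference is presentational --- you argue forward from the hypothesis while the paper chains the same inequalities starting from $\frac{1}{2}\|\rho_{AB}-\sigma_{AB}\|_{\mathrm{LOCC}}$.
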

\begin{proof}
Besides Lemma~\ref{lem:domination}, we will make use of the well-known relation~\cite{nielsenchuang}
\beq
\label{eq:fidelitytrace}
1-F(\rho,\sigma)\leq D(\rho,\sigma)\leq \sqrt{1-F(\rho,\sigma)^2}.
\eeq
The claim can be proved through the following steps:
\begin{equation}
\begin{split}
\fr{1}{2} \| \rho - \sigma \|_\mathrm{LOCC}
&\stackrel{(i)}{\geq} \fr{1}{2} \| \rho - \tilde{\sigma} \|_1\\
&\stackrel{(ii)}{\geq} 1 - F(\rho,\tilde{\sigma})\\
&\stackrel{(iii)}{\geq} 1 - \sqrt{R}\sqrt{\Tr(\rho\tilde{\sigma})}\\
&\stackrel{(iv)}{=} 1 - \sqrt{R}\sqrt{\Tr(\rho\sigma)}\\
&\stackrel{(v)}{\geq} 1 - \sqrt{R}F(\rho,{\sigma})\\
&\stackrel{(vi)}{\geq} 1 - \sqrt{R}\sqrt{1- D(\rho,\sigma)^2}\\
&\stackrel{(vii)}{\geq} 1 - \sqrt{2R\epsilon}
\label{marco1}
\end{split}
\end{equation}
The steps are justified as follows: $(i)$ holds because a possible LOCC strategy is the one-way communication one with the first step consisting in measuring in the classical basis (for $\rho$) of $A$; $(ii)$ and $(vi)$ are due to Eq.~\eqref{eq:fidelitytrace}; $(iii)$ and $(v)$ hold because of Eq.~\eqref{eq:statedom}; $(iv)$ holds because of the cyclic property of the trace; $(vii)$ holds because of hypothesis Eq.~\eqref{eq:hypothmlimit}.
\end{proof}

Our result points out how, with the use of a classical state,
it is impossible to have a hiding scheme with $\epsilon=0$. Indeed, our bound implies that perfect distinguishability ($\epsilon=0$) by global operations implies perfect distinguishability ($\delta=1$) also by LOCC. On the other hand, it is known that there exist good hiding schemes with perfect distinguishability; they necessarily make use of non-classical states. We will consider such a case in the example below.

We  remark that our result just puts limits on a hiding scheme that uses at least one classical state, but hiding schemes that make use of at least a classical state do exist. For example, Hayden \textit{et al.} \cite{Hayden2004} provide an example of a hiding pair in $\mathbb{C}^d\otimes\mathbb{C}^d$, where one of the states is simply the maximally mixed state, $\openone/d^2$, and the other state is the result of the action of an approximately-randomizing random-unitary map $\mathcal{R}$ on part of a maximally-entangled, i.e., the state $\mathcal{R}\otimes\id[\psi^+]$, with
\[
\mathcal{R}[\rho] = \frac{1}{n}\sum_{i=1}^n U_i\rho U_i^\dagger,\
\]
satisfying
\[
\left\|\rho - \frac{\openone}{d}\right\|_\infty \leq \frac{\eta}{d}.
\]
As proven in~\cite{aubrun2009almost}, improving on~\cite{Hayden2004}, this can be achieved with independent Haar-random unitaries $U_i$ and $n\geq C d/\eta^2$, with $C$ a constant. It is immediate to check that such a scheme achieves $\delta\leq\eta/2$ and $\epsilon\leq n/d^2 \approx C / (d \eta^2)$.

\subsection{Ensemble quantumness bounds the quality of quantum data hiding}

We have seen that the classicality of one of the hiding states does not prevent the hiding scheme from working, although it puts some  ``quality'' constraints on it. With the next theorem we will see that the two states must nonetheless display a large ensemble quantumness of correlations.

\begin{theorem}
\label{HideQ}
 The hiding capability of a pair of bipartite states is bounded above by the \linebreak$(\|\cdot\|_1,\{\Pi_A\})$-quantumness of the ensemble consisting of the two states with equal weights:
\begin{equation}
\label{eq:hidingbound}
 \Delta_\mathrm H[\rho_{AB},\sigma_{AB}]\le 2\, Q_{\|\cdot\|_1,\{\Pi_A\}}\left[\left\{\left(\fr{1}{2},\rho_{AB}\right),\left(\fr{1}{2},\sigma_{AB}\right)\right\}\right].
\end{equation}
\begin{proof}
 By definition,
\begin{align}
\Delta_\mathrm H[\rho_{AB},\sigma_{AB}]=&\fr{1}{2}\bigg(\left\|\rho_{AB}-\sigma_{AB}\right\|_1-\left\|\rho_{AB}-\sigma_{AB}\right\|_\mathrm{LOCC}\bigg)\nonumber\\ =&\fr{1}{2}\left(\left\|\rho_{AB}-\sigma_{AB}\right\|_1-\max_{\mathcal{M}\in\mathrm{LOCC}}\left\|\mathcal{M}[\rho_{AB}]-\mathcal{M}[\sigma_{AB}]\right\|_1\right)\nonumber\\ \le&\fr{1}{2}\left(\left\|\rho_{AB}-\sigma_{AB}\right\|_1-\max_{\Pi_A}\left\|\Pi_A[\rho_{AB}]-\Pi_A[\sigma_{AB}]\right\|_1\right)\nonumber\\
=&\fr{1}{2}\min_{\Pi_A}\bigg(\left\|\rho_{AB}-\sigma_{AB}\right\|_1-\left\|\Pi_A[\rho_{AB}]-\Pi_A[\sigma_{AB}]\right\|_1\bigg)\nonumber\\
\le&\fr{1}{2}\min_{\Pi_A}\bigg(\left\|\rho_{AB}-\Pi_A[\rho_{AB}]\right\|_1\nonumber\\
&~~~~~~~~~~~+\left\|\Pi_A[\rho_{AB}]-\Pi_A[\sigma_{AB}]\right\|_1+\left\|\Pi_A[\sigma_{AB}]-\sigma_{AB}\right\|_1\nonumber\\
&~~~~~~~~~~~~~~~~~~~~~~~~~~~~~~~~~~~~~~~~~~~~~~~~-\left\|\Pi_A[\rho_{AB}]-\Pi_A[\sigma_{AB}]\right\|_1\bigg)\nonumber\\
=&2\,Q_{\|\cdot\|_1,\{\Pi_A\}}\left[\left\{\left(\fr{1}{2},\rho_{AB}\right),\left(\fr{1}{2},\sigma_{AB}\right)\right\}\right].
\end{align}
The first inequality above follows from the fact that one-way LOCC measurements that start with a projective measurement are a subset of all LOCC measurements. The second inequality follows from repeated application of the triangle inequality for the trace norm.
\end{proof}
\label{HidePrTr}
\end{theorem}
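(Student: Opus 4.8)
The plan is to upper-bound $\Delta_\mathrm H[\rho_{AB},\sigma_{AB}]$ by feeding into the LOCC distance one particular, generally suboptimal, LOCC discrimination protocol: Alice performs a complete local projective measurement $\Pi_A$, broadcasts the outcome, and Bob then applies the measurement optimal for his conditional states. By the hierarchy of measurement norms recorded in \eqref{eq:hierarchymeasurements}, the distinguishing power of such a protocol is exactly $\|\Pi_A[\rho_{AB}]-\Pi_A[\sigma_{AB}]\|_1$ --- Bob's optimal post-processing being already folded into the trace norm of the operators block-diagonal in Alice's outcome --- so $\|\rho_{AB}-\sigma_{AB}\|_\mathrm{LOCC}\ge\max_{\Pi_A}\|\Pi_A[\rho_{AB}]-\Pi_A[\sigma_{AB}]\|_1$. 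Substituting this into the definition of the hiding capability rewrites it as $\tfrac12\min_{\Pi_A}\bigl(\|\rho_{AB}-\sigma_{AB}\|_1-\|\Pi_A[\rho_{AB}]-\Pi_A[\sigma_{AB}]\|_1\bigr)$.

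The core manipulation is then a single application of the triangle inequality to the global term, routed through the two post-measurement states:
\[
\|\rho_{AB}-\sigma_{AB}\|_1\le\|\rho_{AB}-\Pi_A[\rho_{AB}]\|_1+\|\Pi_A[\rho_{AB}]-\Pi_A[\sigma_{AB}]\|_1+\|\Pi_A[\sigma_{AB}]-\sigma_{AB}\|_1 .
\]
For each fixed $\Pi_A$ the middle term cancels the subtracted LOCC-type term, and passing to the minimum gives $\Delta_\mathrm H\le\tfrac12\min_{\Pi_A}\bigl(\|\rho_{AB}-\Pi_A[\rho_{AB}]\|_1+\|\sigma_{AB}-\Pi_A[\sigma_{AB}]\|_1\bigr)$. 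Finally I would identify the right-hand side with the target expression: by \eqref{eq:basicdefinition}, the $(\|\cdot\|_1,\{\Pi_A\})$-quantumness of the equal-weight ensemble --- computed with the trace distance $D_1=\tfrac12\|\cdot\|_1$ and weights $\tfrac12$ --- equals $\tfrac14\min_{\Pi_A}\bigl(\|\rho_{AB}-\Pi_A[\rho_{AB}]\|_1+\|\sigma_{AB}-\Pi_A[\sigma_{AB}]\|_1\bigr)$, so the bound reads exactly $2\,Q_{\|\cdot\|_1,\{\Pi_A\}}[\{(\tfrac12,\rho_{AB}),(\tfrac12,\sigma_{AB})\}]$. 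The conceptual point making the argument go through is that a single $\Pi_A$ is forced to play two roles at once --- it is the LOCC measurement witnessing the lower bound on the LOCC norm and the disturbing measurement defining the ensemble quantumness --- which is why the same minimization over $\Pi_A$ survives on both sides of the chain of inequalities.

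I do not expect a genuinely hard step: the proof is a short combination of one measurement-norm inequality with the triangle inequality. The only place requiring care is the first inequality, namely that ``project on $A$, announce, let $B$ post-process optimally'' is a bona fide LOCC strategy whose distinguishability equals $\|\Pi_A[\rho_{AB}]-\Pi_A[\sigma_{AB}]\|_1$; this is exactly the content of \eqref{eq:hierarchymeasurements} and the surrounding discussion, so it can be invoked rather than re-derived. A secondary, purely bookkeeping matter is tracking the factor of $2$, which originates solely from the $\tfrac12$ normalization built into the trace distance $D_1$ relative to the bare trace norms appearing after the triangle step; no inequality is actually tightened or loosened there.
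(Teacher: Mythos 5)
Your proposal is correct and follows essentially the same route as the paper: lower-bound the LOCC norm by the one-way ``Alice projects and announces'' strategy via \eqref{eq:hierarchymeasurements}, then apply the triangle inequality through $\Pi_A[\rho_{AB}]$ and $\Pi_A[\sigma_{AB}]$ so the measured-states term cancels, and identify the remainder with $2\,Q_{\|\cdot\|_1,\{\Pi_A\}}$ of the equal-weight ensemble. The bookkeeping of the factor of $2$ is also handled exactly as in the paper.
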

Notice that all the steps above could be adapted to the case of projective local measurements on both $A$ and $B$, leading to $ \Delta_\mathrm H[\rho_{AB},\sigma_{AB}]\le Q_{\|\cdot\|_1,\{\Pi_A\otimes \Pi_B\}}\left[\left\{\left(\fr{1}{2},\rho_{AB}\right),\left(\fr{1}{2},\sigma_{AB}\right)\right\}\right]$.

\subsubsection*{Example: Werner hiding pairs}

A well-known hiding pair is that constituted by the two Werner states,
\begin{align}
\sigma_\pm&\assign
\fr{\eins\otimes\eins\pm W}{d(d\pm1)},
\end{align}
with $W$, we recall, the swap operator.
The states $\sigma_+$ and $\sigma_-$ are in fact normalized projectors onto the symmetric and antisymmetric subspaces, respectively. They form a hiding pair, with
$$\fr{1}{2}\nrm{\sigma_+-\sigma_-}_1=1,$$ since they are orthogonal,
and
\beq
\label{eq:wernerhiding}
\fr{1}{2}\nrm{\sigma_+-\sigma_-}_\mathrm{LOCC}=\fr{2}{d+1}.
\eeq
That Eq.~\eqref{eq:wernerhiding} holds comes from the fact that even positive-under-partial-transposition (PPT) measurements, more general than LOCC, do not do better (see~\cite{Matthews2009,eggeling2002hiding,divincenzo2002quantum}), and there are LOCC measurements that achieve the bound (interestingly, the bound is achieved exactly via local orthogonal projections, as easily verified~\cite{privateWatrous}).
Therefore, the hiding gap for this pair is exactly
\beq
\label{eq:gapwerner}
\Delta_\mathrm H[\sigma_+,\sigma_-]=1-\fr{2}{d+1}=\fr{d-1}{d+1}.
\eeq

We calculate the ensemble quantumness of correlations to be
\[
\begin{split}
Q_{\|\cdot\|_1,\{\Pi_A\}}\left[\left\{\left(\fr{1}{2},\sigma_+\right),\left(\fr{1}{2},\sigma_-\right)\right\}\right] &= \min_{\Pi_A}\frac{1}{4}\left[(\|\sigma_+- \Pi_A[\sigma_+]\|_1+\|\sigma_+- \Pi_A[\sigma_+]\|_1\right)\\
&= \min_{\Pi_A}\frac{1}{4}\left( \left\| \fr{\eins\otimes\eins+ W}{d(d+1)}-\Pi_A\left[ \fr{\eins\otimes\eins+ W}{d(d+1)}\right]\right\|_1\right.\\
&\quad\quad \quad \;\; \left.+   \left\| \fr{\eins\otimes\eins- W}{d(d-1)}-\Pi_A\left[ \fr{\eins\otimes\eins-W}{d(d-1)}\right]\right\|_1\right)\\
&= \min_{\Pi_A}\frac{1}{4}\left(\fr{1}{d(d+1)}\left\| W -\Pi_A[W]\right\|_1   +  \fr{1}{d(d-1)}\left\| W -\Pi_A[W]\right\|_1   \right)\\
&=\frac{1}{2(d^2-1)} \left\| W -\sum_i \proj{i}\otimes\proj{i}\right\|_1\\
&=\frac{1}{2(d^2-1)} \left\| \openone-\sum_i \proj{i}\otimes\proj{i}\right\|_1\\
&=\frac{1}{2(d^2-1)} (d^2-d)\\
&=\frac{d}{2(d+1)}
\end{split}
\]

Thus, our bound \eqref{eq:hidingbound} reads $\Delta_H\leq 2\frac{d}{2(d+1)}=\frac{d}{d+1}$ and is quite tight in this case, almost matching the actual quality of the hiding scheme~\eqref{eq:gapwerner}.

\section{Conclusions}
\label{sec:conclusions}

Both the quantumness of ensembles and the quantumness of correlations have been investigated quite intensively in the recent past.  These two notions of quantumness are deeply connected. On one hand, any single-system ensemble of states that exhibits quantumness can be used to construct a distributed state that exhibits some quantumness of correlations~\cite{piani2008no,ensembleLuo2010,ensembleLuo2011,yao2013quantum}. On the other, the study of the quantumness of correlations has often relied on the study of the quantumness of ensembles, intended, e.g., in terms of the impossibility of simultaneously cloning/broadcasting non-commuting single-system states~\cite{barnum1995noncommuting,piani2008no,luo2010decomposition}.

In this paper, in a sense, we combined the notions of quantumness related to dealing with multiple states and the one related to non-classical correlations. We did so by introducing and studying the notion of ensemble quantumness of correlations. Such a notion, we argued, actually fits in a larger unified framework for the study of quantum properties, which encompasses the notion of quantumness of single-system states, as well as of quantumness of correlations of a single bi- or multi-partite state. In our case we chose to depict such a unified framework as based on the quantumness revealed by disturbance under (projective) measurements.

We argued that the ensemble quantumness of correlations plays an important role in one of the basic tasks in quantum information processing: quantum data hiding. Indeed, we noticed how quantum data hiding does not require entanglement, in the sense that there are pairs of hiding states---states used to encoded a bit, so that such bit is recoverable by global quantum operations but not by local operations assisted by classical communication---that are not entangled, and still ensure a good hiding scheme. In this paper we proved that even though quantum data hiding does not require the strong non-classicality linked to entanglement, some non-classicality of correlations must necessarily be present. More precisely, based on existing schemes for quantum data hiding, we argued that the key property is not the quantumness of correlations of the individual states in the hiding pair, as there are hiding schemes that use at least one strictly classical bipartite state. The strictly necessary property seems to rather be the ensemble quantumness of correlation. Indeed, we prove that, if the latter is small, then the quality of the hiding scheme is also necessarily small. This kind of observation is very similar in spirit to the  one that relates the quantumness of correlations to entanglement distribution. For the latter task, Cubitt et al.~\cite{cubitt2003separable} had proven that two parties can increase their entanglement  by exchanging a quantum carrier that is unentangled with both parties. Nonetheless, it was proven in~\cite{streltsov2012quantum,chuan2012quantum} that the increase is bounded by the amount of general non-classical correlations between the particle and the two parties.

After introducing the notion of ensemble quantumness of correlations, we have focused on providing some general bounds on it. This has naturally led us to study in detail the disturbance, as measured by the trace-distance, induced by local projective measurements on a pure bipartite state.  Several researchers interested in the quantumness of correlations had already considered the disturbance induced by local projective measurement on quantum states~\cite{luodisturbance,luo2013hierarchy,nakano2013negativity,tracediscordsarandy}. Nonetheless, as far as we know, we are the first to provide an analytical formula for such disturbance, as measured by trace distance, for all pure states. We actually proved that said disturbance is an entanglement monotone on pure states, in the sense that it is monotonically non-increasing on average under LOCC transformations. This qualifies it to be a good entanglement measure for pure states, which we call \emph{entanglement of disturbance}, and enables a meaningful  extension to mixed states by means of a standard convex roof construction. Returning to ensemble quantumness, we studied several examples, going from the quantumness of ensembles of single-qubit states, to the Haar ensembles for both single systems and bipartite systems. The latter examples perfectly illustrate how the ensemble quantumness of correlations is different from the quantumness of correlations of single states.

The main open problem regards the existence of hiding schemes where both states are individually strictly classical, although obviously (given our resutls) the pair must exhibit ensemble quantumness. Indeed, thanks to existing examples, we know that at least one state in the hiding pair can be strictly classical, but we have proven that its presence poses strong constraints on the hiding scheme. So the question is: Is there a no-go theorem for hiding by means of strictly classical states, even if they are quantum with respect to one another?

\section*{Acknowledgements}

We acknowledge discussions with G. Adesso, A. Brodutch, A. Streltsov, J. Watrous and A. Winter.  We thank D. Reeb for correspondence related to the connection between the results of this paper and~~\cite{jivulescu2014positive} (see the note added below). V.~N. started working on this project during his studies at the Institute for Quantum Computing at the University of Waterloo, and thanks N.~L{\"u}tkenhaus for support and encouragement. J.~C.  is grateful to N.~L{\"u}tkenhaus  and M.~P. for hosting him at the Institute for Quantum Computing, where this work was initiated. M.~P. acknowledges support by NSERC, CIFAR and Ontario Centres of Excellence. J.C. acknowledges support from Spanish MINECO (project FIS2008-01236) with FEDER funds.

\section*{Note added}

After the submission of this manuscript, a work by Jivulescu et al. containing related results appeared~\cite{jivulescu2014positive}.  Jivulescu et al. focus on the problem of whether all bipartite quantum states having a prescribed spectrum satisfy the reduction criterion for separability~\cite{cerf1999reduction,horodecki1999reduction}. In order to attack this problem, Jivulescu et al. evaluate the spectrum of the operator resulting from the action of the reduction map~\cite{cerf1999reduction,horodecki1999reduction} on one party of an arbitrary bipartite pure state, providing an alternative proof of the main formula of Theorem~\ref{thm:cmonotone}. The connection between the present paper and the results of~\cite{jivulescu2014positive} is further discussed in the most recent version of~\cite{jivulescu2014positive}.

\end{document}